\newtheorem{definition}{Definition}[section]
\newtheorem{theorem}[definition]{Theorem}
\newtheorem{lemma}[definition]{Lemma}
\newtheorem{proposition}[definition]{Proposition}
\title{$C^0$-inextendibility of the Kasner spacetime}
\author{Benedikt Miethke}
\date{August 2024}
\begin{document}
\maketitle

\begin{abstract}
    The Kasner spacetime is a cosmological model of an anisotropic expanding universe without matter and is an exact solution of the Einstein vacuum equations $Ric(g)=0$. It depends on a choice of so-called Kasner exponents and if one of these is negative, then the Kretschmann scalar blows up as $t\rightarrow 0$, i.e. there exists a curvature singularity. Thus, it is manifestly inextendible as a Lorentzian manifold with a twice differentiable metric. In this Master's thesis \footnote{Master's thesis written at Universität Wien, supervised by Jan Sbierski.\\ Original document: \url{https://utheses.univie.ac.at/detail/72128/}} we proof that it is even inextendible as a Lorentzian manifold with merely continuous metric, which is a stronger statement. We do so by adapting the proof of the $C^0$-inextendibility of the maximal analytically extended Schwarzschild spacetime established by Jan Sbierski in \cite{Schwarzschild2018}.
\end{abstract}

\tableofcontents
%\newpage

\pagestyle{headings}
\pagenumbering{arabic}

\section{Introduction}
\subsection{Motivation}
The question whether a given solution to the Einstein equations is maximal or can be extended as a (weak) solution motivates the study of low regularity (in)extendibility of a Lorentzian manifold. Of particular interest is the so-called \textit{strong cosmic censorship conjecture}, originally proposed by Roger Penrose. It states that, generally, the theory of general relativity is deterministic, i.e. we should be able predict the fate of all classical observers. One possible mathematical formulation of this conjecture is the following (\cite{Schwarzschild2016}):\\
\begin{addmargin}[25pt]{5pt}
For generic asymptotically flat initial data for the vacuum Einstein equations\\
$Ric(g)=0$, the maximal globally hyperbolic development is inextendible as a\\
suitably regular Lorentzian manifold.
\end{addmargin}
\bigskip

The meaning of \textquotedblleft suitably regular\textquotedblright and \textquotedblleft generic\textquotedblright initial data is still debated to this day. However it is clear, that a better understanding of $C^0$-inextendibility will also yield inextendibility results in all other regularity classes. So the study of $C^0$-extensions can give useful insights for research on the strong cosmic censorship conjecture, even if the true \textquotedblleft suitable regularity\textquotedblright, for which the conjecture should be stated, might be of higher regularity than $C^0$.\\
\\
In this Master's thesis we will prove the $C^0$-inextdendibility of the Kasner spacetime, which was (to the best of the authors knowledge) first suspected in \cite{Schwarzschild2016}. The proof will follow the strategy established in \cite{Schwarzschild2018} and thus confirm that this strategy of proving $C^0$-inextendibility of the maximal analytically extended Schwarzschild spacetime can be generalized to other spacetimes with suitable properties. In the following we will always refer to the maximal analytically extended Schwarzschild spacetime simply as the Schwarzschild spacetime.

\subsection{Comparison of Schwarzschild and Kasner proof}
The Schwarzschild spacetime and Kasner spacetime are fundamentally different as the first models a static, non-rotating, spherical symmetric black hole and the latter is a cosmological model of an anisotropic expanding/contracting universe with a big bang/big crunsh. However, as we will see, they share important properties that allow us to prove $C^0$-inextendibility with the same strategy.\\
\\
Let us first recall the proof idea for the Schwarzschild spacetime in \cite{Schwarzschild2018}:\\
We argue by contradiction and assume there exists a $C^0$-extension of the Schwarzschild spacetime. Since the Schwarzschild spacetime is globally hyperbolic, a result established in \cite{GallowayLingSbierski} states that then, there exists a (without loss of generality) future-directed timelike geodesic leaving the Schwarzschild spacetime. It follows that this geodesic either "leaves through the exterior", i.e. timelike/null infinity $r\rightarrow\infty$, or it "leaves through the interior", i.e. the curvature singularity $r=0$.\\
The first case is easily ruled out, since such geodesics are future complete in the Schwarzschild spacetime (compare with Theorem 1 in \cite{GallowayLingSbierski}).\\
So we are left with the case that the geodesic leaves through the curvature singularity $r=0$. In this case, we can find a chart around the endpoint of the geodesic at $r=0$, where we can map the future boundary of the extension as an achronal Lipschitz graph and we have good control of the metric (i.e. it is close to the Minkowski metric). Furthermore, we can consider the future directed timelike curve given by the $x^0$-coordinate in the chart, which leaves the spacetime through the same point as the geodesic. It turns out that there exists a point on this curve, close to the endpoint on the boundary, such that its whole future in the Schwarzschild spacetime is mapped into the chart below the graph and that we can reach this point from the past with a timelike curve from any space-coordinate of the chart (far enough in the past in the chart).
This is the most difficult step of the proof, since we somehow need to control the causality of the spacetime, to map it below the graph. We do this by using the isometries we have because of the spherical symmetry and the coordinate vector field $\partial_t$ being stationary and the fact, that the Schwarzschild spacetime is future-one connected. Furthermore, it is important, that the boundary of the spacetime is spacelike in the conformal Penrose diagram.\\
Since the whole future of the point is mapped in the chart below the graph and we have uniform causality bounds on the metric, we see, that there is a uniform upper bound on the length of the shortest possible curves on a Cauchy hypersurfcaes that connect any two points in the future of the point (that was mentioned above) intersected with this Cauchy hypersurface.\\
This however contradicts the geometry of the Schwarzschild spacetime. Note, that the hypersurfaces of constant $r$ are Cauchy hypersurfaces. Furthermore, the Schwarzschild spcaetime has a divergent spacelike diameter (compare Definition 5.1 in \cite{Schwarzschild2016}). We will however not use this definition in this thesis (except in this introduction) since it is not quite the right notion for what we want. However, we can find a sequence of such Cauchy hypersurfaces that converge to the curvature singularity $r=0$. It follows by the openness of the future, that for any such Cauchy hypersurface (close enough to $r=0$) we can find two points in the future of the point mentioned above such that their $t$-coordinates differ by a fixed constant, in the chart, for all Cauchy hypersurfaces. The length of the shortest possible curve connecting them on the Cauchy hypersurface then blows up when taking Cauchy hypersurfaces close to $r=0$. Which is a contradiction to the observation above.\\
\\
So to conclude, what are the important properties of the maximal analytically extended Schwarzschild spacetime we need for the proof?

\begin{enumerate}
    \item global hyperbolicity
    \item future one-connectedness
    \item geodesic either hits the curvature singularity or is (future) complete
    \item space-coordinates $t$ and coordintaes of the sphere converge for a timelike curve hitting the curvature singularity
    \item isometries coming from spherical symmetry and $\partial_t$ being static
    \item spacelike boundary at the curvature singularity (in a conformal sense)
    \item divergent spacelike diameter
\end{enumerate}

We will prove the properties 1. and 2. for the Kasner spacetime (Lemmas \ref{globallyhyperbolic} and \ref{futureoneconnected}). To get similar properties as 3.-7., we will need to restrict to a special case of Kasner spacetimes, namely the ones with a \textit{negative Kasner exponent}. It turns out these are the only Kasner spacetimes that have a curvature singularity and all other Kasner spacetimes are smoothly extendible to Minkowski space.\\
We will be able to prove analogues to properties 3. and 4. for these spacetimes with negative Kasner expoenent.
Although not spherical symmetric, the Kasner spacetime has sufficiently many isometries to achieve property 5.\\
Property 6. is easy to see and property 7. can be also proved with the isometries of the Kasner spacetime.

\subsection{Notations and conventions}
We fix some notations and conventions here that will be used throughout this thesis, unless explicitly stated otherwise.\\
\begin{itemize}
    \item We will assume Einstein summation convention $A^iB_i=\sum_{i} A^iB_i$
    \item Greek indices will be assumed to refer to all dimensions, i.e. $\mu,\nu,\kappa,\rho,...\in\{0,1,...,d\}$
    \item Indices that only refer to space dimensions will be written in Latin letters, i.e. $i,j,k,...\in\{1,...,d\}$
    \item All manifolds are considered to be Hausdorff, second countable and of dimension $d+1\geq 2$. Moreover, all manifolds are assumed to be smooth\footnote{We can always assume this, since $M$ must be endowed with an at least $C^1$ differentiable structure, for it to carry a continuous Lorentzian metric. However, then there exists a compatible smooth differentiable structure on $M$ (compare e.g. \cite{hirsch2012differential}).}.
    \item Coordinate vector fields $\partial_\mu:=\frac{\partial
    }{\partial x^\mu}$ on $U\subset M$, for a chart $\varphi:U\rightarrow\mathbb{R}^{d+1}$, are characterized by the fact that $T_p\varphi\cdot\partial_\mu(p) = (\varphi(p),e_\mu)$ for all $p\in U$, where $T_p\varphi$ is the tangent map of $\varphi$ at $p$ and $e_\mu$ is the $\mu$-th standard basis vector of $\mathbb{R}^{d+1}$.
    \item for a curve $\gamma:I\rightarrow \mathbb{R}^{d+1}$ we write $\gamma(s)=(\gamma_t(s),\underline{\gamma}(s))$, where $\gamma_t:I\rightarrow\mathbb{R}$ and $\underline{\gamma}(s)=(\underline{\gamma}_1(s),...,\underline{\gamma}_d(s))\in\mathbb{R}^d$
    \item $\eta=-dx_0^2+\sum_{i=1}^d dx_i^2 = \textup{diag}(-1,1,...,1)$ is the usual Minkowski metric and $d+1$-dimensional Minkowski space is denoted by $(\mathbb{R}^{d+1}_1,\eta)$
\end{itemize}

\section{Theory}

\subsection{Definitions and continuous Lorentzian metrics}
In this section, we collect some definitions and prove important results in causality theory with continuous Lorentzian metrics. This will be similar to Chapter 2 in \cite{Schwarzschild2016}. We will assume knowledge of manifolds with smooth Lorentzian metric as introduced for example in \cite{O'Neill}.
\begin{definition}
    Let $(M,g)$ be a Lorentzian manifold with continuous metric. A \textup{timelike curve} is a piecewise smooth curve $\gamma:I\rightarrow M$, where $I\subseteq\mathbb{R}$ is some interval, if for all $s\in I$, where $\gamma$ is differentiable, we have that $\dot{\gamma}(s)$ is timelike and at each point of $I$ the left-sided and right-sided derivative lie in the same connected component of the timelike double cone in the tangent space.\\
    Analogously, we call a piecewise smooth curve $\gamma:I\rightarrow M$ \textup{causal} if $\dot{\gamma}(s)$ is timelike or null in the above sense.
\end{definition}

\begin{definition}
    Let $(M,g)$ be a Lorentzian manifold with continuous metric.
    \begin{itemize}
        \item A \textup{time orientation} is a map $\xi:M\rightarrow P(TM)$, where $P(TM)$ denotes the power set of the tangent bundle $TM$, such that for all $p\in M$: $\xi(p)$ is one of the connected components of the timelike double cone in the tangent space and there exists a chart around $p$ such that $\frac{\partial}{\partial x^0}(q)\in\xi(q)$ for all $q$ in the domain of the chart.\\
        We call a Lorentzian manifold \textup{time-oriented}, a time orientation is chosen. 
        \item A timelike curve $\gamma:I\rightarrow M$ is called \textup{future (past) directed}, if $\dot{\gamma}(s)\in\xi(\gamma(s))$ \,$(-\dot{\gamma}(s)\in\xi(\gamma(s)))$ for all $s\in I$. We will call such curves \textup{FDTL (PDTL)} or \textup{future (past) directed timelike}.
        \item For $p,q\in M$ we define \textup{$p\ll q$ $(p\gg q)$}, if there exists a future (past) directed timelike curve from $p$ to $q$.
        \item The \textup{timelike future} of $p\in M$ in $M$ is defined by $I^+(p,M):=\{q\in M\,|\,p\ll q\}$. The \textup{timelike past} of $p$ in $M$ is defined by $I^-(p,M):=\{q\in M\,|\,p\gg q\}$.
        \item A \textup{spacetime} is a smooth, connected and time-oriented Lorentzian manifold.
    \end{itemize}
\end{definition}

Lorentzian manifolds with merely continuous metric do not posses an exponential map in general. So we are not able to use normal-coordinates, Fermi-coordinates or others that we usually have at our disposal when dealing with smooth (or at least $C^2$) metrics. However, given a timelike curve we can find useful coordinates even when the metric is only continuous. These are also known as cylindrical neighborhoods (Def 1.8 in \cite{continuousmetrics}).

\begin{lemma}\label{chart}
    Let $(M,g)$ be a Lorentzian manifold with a continuous metric $g$, and let $\gamma:[0,1]\rightarrow M$ be a timelike curve. After possibly reparameterising $\gamma$, we get for every $\delta>0$ an open neighborhood $U$ of $\gamma(0)$, some $\varepsilon>0$ and a chart $\varphi:U\rightarrow(-\varepsilon,\varepsilon)^{d+1}$ such that
    \begin{itemize}
        \item $\varphi(\gamma(0))=(0,...,0)$
        \item $(\varphi\circ\gamma)(s)=(s,0...,0)$ for all $s\in[0,\varepsilon)$
        \item $g_{\mu\nu}(0)=\eta_{\mu\nu}$ where $\eta_{\mu\nu}=diag(-1,1,...,1)$
        \item $|g_{\mu\nu}(x)-\eta_{\mu\nu}|<\delta$ for all $x\in(-\varepsilon,\varepsilon)^{d+1}$
    \end{itemize}
\end{lemma}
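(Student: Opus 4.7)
The plan is to successively modify an arbitrary smooth chart around $\gamma(0)$ by (i) a linear change of coordinates, (ii) a reparameterisation of $\gamma$, and (iii) a $y^0$-dependent translation of the spatial coordinates, each tailored to enforce one conclusion without spoiling the others.

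For step (i), pick any smooth chart $\psi: V \to \mathbb{R}^{d+1}$ with $\psi(\gamma(0)) = 0$. Since $g|_{\gamma(0)}$ is a Lorentzian inner product on $T_{\gamma(0)}M$ and $\dot\gamma(0)$ is timelike, a Lorentzian Gram--Schmidt process starting from $e_0 := \dot\gamma(0)/\sqrt{-g(\dot\gamma(0),\dot\gamma(0))}$ produces an $\eta$-orthonormal frame $\{e_\mu\}$ at $\gamma(0)$ with $e_0$ parallel to $\dot\gamma(0)$. A suitable linear change of coordinates then yields a chart $y$ in which $\partial_{y^\mu}|_{\gamma(0)} = e_\mu$, so that $g_{\mu\nu}(0) = \eta_{\mu\nu}$ and $\dot\gamma(0)$ is a positive multiple of $\partial_{y^0}|_{\gamma(0)}$.

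For step (ii), the function $s \mapsto y^0(\gamma(s))$ has positive derivative at $0$, so it is a local diffeomorphism. Reparameterising $\gamma$ by its inverse gives $y^0(\gamma(s)) = s$ on some $[0, s_0)$; the spatial components $\gamma_i := y^i \circ \gamma$ are then smooth on $[0, s_0)$ with $\gamma_i(0) = 0$ and $\gamma_i'(0) = 0$, the latter because the reparameterised $\dot\gamma(0)$ now points purely along $\partial_{y^0}$. Extend each $\gamma_i$ smoothly to $(-s_0, s_0)$. For step (iii), set $\varphi^0(p) := y^0(p)$ and $\varphi^i(p) := y^i(p) - \gamma_i(y^0(p))$. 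Direct computation yields $d\varphi|_{\gamma(0)} = \mathrm{id}$, so $\varphi$ is a local diffeomorphism; the metric components at the origin are therefore unchanged (equal to $\eta_{\mu\nu}$), and $\varphi(\gamma(s)) = (s, 0, \ldots, 0)$ holds by construction. Finally, since $g_{\mu\nu}$ is continuous in the $\varphi$-chart and equals $\eta_{\mu\nu}$ at the origin, restricting to the preimage of a sufficiently small cube $(-\varepsilon, \varepsilon)^{d+1}$ enforces $|g_{\mu\nu}(x) - \eta_{\mu\nu}| < \delta$.

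The main obstacle is the bookkeeping that each modification preserves what has already been arranged. The only non-obvious point is that step (iii) could \emph{a priori} alter the metric at the origin, but the identity $d\varphi|_{\gamma(0)} = \mathrm{id}$ --- which rests on $\gamma_i'(0) = 0$ and hence on the alignment in step (i) --- prevents this. Since $\gamma$ is only assumed piecewise smooth, one works on its first smooth segment; this is harmless as the conclusion only concerns $s \in [0, \varepsilon)$.
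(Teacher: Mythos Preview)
Your argument is correct. It differs from the paper's proof mainly in the order of the two key operations and in the tool used to straighten the curve. The paper first reparameterises so that $g(\dot\gamma(0),\dot\gamma(0))=-1$, then invokes the local immersion theorem to obtain a chart in which $\varphi\circ\gamma(s)=(s,0,\ldots,0)$, and only afterwards applies a linear Gram--Schmidt change at the origin to force $g_{\mu\nu}(0)=\eta_{\mu\nu}$. You reverse this: you orthonormalise first via Lorentzian Gram--Schmidt on the full tangent space, and then straighten $\gamma$ by the explicit $y^0$-dependent shift $\varphi^i=y^i-\gamma_i(y^0)$. Your route is more self-contained in that it replaces the black-box immersion theorem with a concrete formula, at the price of the extra verification that $d\varphi|_{\gamma(0)}=\mathrm{id}$ (which, as you correctly note, hinges on $\gamma_i'(0)=0$ and hence on the prior alignment of $e_0$ with $\dot\gamma(0)$). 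The paper's route is slightly terser but leaves to the reader the symmetric check that its post-hoc linear change preserves the already-straightened $x_0$-axis; this holds because Gram--Schmidt starting from $\partial_0$ keeps $\partial_0$ fixed. Both approaches are standard and of comparable length.
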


\begin{proof}
    After a possible linear reparametrization of $\gamma$, we can assume without loss of generality, that
    \begin{equation}\label{gram}
        g(\dot{\gamma}(0),\dot{\gamma}(0))=-1
    \end{equation}
    Since $\gamma$ is piecewise smooth, it has only finitely many breakpoints. So we can find a neighborhood $U$ of $\gamma(0)$ and an $\epsilon>0$ such that $\gamma([0,\epsilon))\subseteq U$ and $\gamma|_{[0,\epsilon)}$ is smooth. Thus $\gamma|_{[0,\epsilon)}$ is a smooth curve with timelike (so nowhere vanishing) tangent vector, i.e. an immersion. By the local immersion theorem (e.g. Theorem 4.12 in \cite{lee2003introduction}) we can find a chart (choosing $U$ and $\epsilon$ smaller if necessary) such that the first two points are satisfied.\\
    In these coordinates (\ref{gram}) is still satisfied. Note that the Lorentzian metric restricted to the space-coordinates is a positive definite inner product. Thus, we get a linear map, obtained from the Gram-Schmidt orthonomalisation procedure based at the origin, to also achieve the third point.\\
    Since the metric is continuous, for a given $\delta>0$ we can choose $U$ and $\epsilon$ even smaller such that the fourth point of the Lemma also holds.    
\end{proof}

These coordinates and similar charts will be used extensively throughout this thesis.\\
\\
We define here some notations that we will often use going forward. Fix $0<a<1$ and let $<.,.>_{\mathbb{R}^{d+1}}$ be the Euclidean inner product, $\|.\|_{\mathbb{R}^{d+1}}$ the induced norm on $\mathbb{R}^{d+1}$ and $e_0=(1,0,...,0)\in\mathbb{R}^{d+1}$. Then we define
\begin{itemize}
    \item $C^+_a:=\{X\in\mathbb{R}^{d+1}\,|\,\frac{<X,e_0>_{\mathbb{R}^{d+1}}}{\|X\|_{\mathbb{R}^{d+1}}}>a\}$
    \item $C^-_a:=\{X\in\mathbb{R}^{d+1}\,|\,\frac{<X,e_0>_{\mathbb{R}^{d+1}}}{\|X\|_{\mathbb{R}^{d+1}}}<-a\}$
    \item $C^c_a:=\{X\in\mathbb{R}^{d+1}\,|\,-a<\frac{<X,e_0>_{\mathbb{R}^{d+1}}}{\|X\|_{\mathbb{R}^{d+1}}}<a\}$
\end{itemize}
     Here, $C^+_a$ describes the forward cone of vectors which form an angle less than $cos^{-1}(a)$ with the $x_0$-axis. Note that the forward and backward cones of timelike vectors in Minkowski space correspond to the value $a=cos(\frac{\pi}{4})=\frac{1}{\sqrt{2}}$.\\
     Furthermore, it is easy to see that if $0<a<\frac{1}{\sqrt{2}}<b<1$, then $C^+_b\subseteq C^+_{\nicefrac{1}{\sqrt{2}}}\subseteq C^+_a$ and $C^-_b\subseteq C^-_{\nicefrac{1}{\sqrt{2}}}\subseteq C^-_a$.

\begin{proposition}\label{futureisopen}
    Let $(M,g)$ be a Lorentzian manifold with a continuous metric $g$. The timelike future $I^+(p,M)$ and timelike past $I^-(p,M)$ are open in $M$ for all $p\in M$.
\end{proposition}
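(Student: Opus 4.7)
The plan is to reduce the problem to Lemma \ref{chart}: near any point $q \in I^+(p,M)$ we obtain a chart in which the metric is uniformly close to Minkowski, and in which nearby points can be reached from $p$ by shortcutting the original witness curve and appending a short segment that is constructed ``by hand'' in coordinates.

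Let $q \in I^+(p, M)$ and pick a FDTL curve $\gamma: [0, 1] \to M$ from $p$ to $q$. I would apply Lemma \ref{chart} not to $\gamma$ itself but to the time-reversed curve $\tilde{\gamma}(s) := \gamma(1-s)$, which is a timelike curve starting at $q$. This yields a chart $\varphi: U \to (-\varepsilon, \varepsilon)^{d+1}$ centered at $q$ in which the last segment of $\gamma$ is parameterised as $t \mapsto (1-t, 0, \ldots, 0)$, together with the bound $|g_{\mu\nu}(x) - \eta_{\mu\nu}| < \delta$ throughout. Fix $b \in (1/\sqrt{2}, 1)$ and pick $\delta$ so small that every vector in $C^+_b \cup C^-_b$ is $g$-timelike at every point of the chart (this is a direct continuity estimate of the form $g(X,X) \leq \eta(X,X) + \delta\|X\|^2$). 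Since $\gamma$ is future-directed and in the chart $\dot{\gamma} = -\partial_0$ along the axis, the vector $-\partial_0$ lies in the time orientation $\xi(q) \cap C^-_b$; connectedness of the chart domain, connectedness of $C^-_b$, and continuity of $\xi$ then force $C^-_b \subset \xi$ on all of $U$.

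Now fix a small $\eta \in (0, \varepsilon)$, so that $\gamma(1-\eta)$ has chart coordinates $(\eta, 0, \ldots, 0)$. For any $x \in (-\varepsilon, \varepsilon)^{d+1}$ with $\|x\|$ sufficiently small relative to $\eta$, consider the straight-line curve $\ell(s) = (1-s)(\eta, 0, \ldots, 0) + sx$ for $s \in [0, 1]$. Its constant tangent $(x^0 - \eta, x^1, \ldots, x^d)$ has first entry close to $-\eta$ and remaining entries small, hence lies in $C^-_b$; consequently $\varphi^{-1} \circ \ell$ is a FDTL smooth curve from $\gamma(1-\eta)$ to $\varphi^{-1}(x)$. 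Concatenating $\gamma|_{[0, 1-\eta]}$ with $\varphi^{-1} \circ \ell$ produces a piecewise smooth curve from $p$ to $\varphi^{-1}(x)$, and at the junction both one-sided tangents lie in $\xi$, so the concatenation is FDTL in the sense of the definition. This exhibits a full coordinate neighbourhood of $q$ inside $I^+(p, M)$, proving that $I^+(p, M)$ is open. The argument for $I^-(p, M)$ is identical after interchanging past and future (equivalently, apply the result to the reverse time orientation).

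The main technical obstacle is the orientation bookkeeping: because Lemma \ref{chart} was applied to the reversed curve $\tilde{\gamma}$, the natural $x^0$-axis in the chart runs from $q$ into the past, so one has to carefully identify $\xi$ with $C^-_b$ (rather than $C^+_b$) throughout $U$ via the continuity argument outlined above. Once this is done, verifying that the tangent of $\ell$ lies in $C^-_b$ and that the concatenation respects the definition of a timelike curve is elementary.
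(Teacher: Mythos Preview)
Your argument is correct and follows essentially the same route as the paper: use Lemma~\ref{chart} to obtain a near-Minkowski chart centered at $q$, truncate the witness curve at a nearby point on the $x^0$-axis, and connect that point to every nearby $x$ by a coordinate straight line whose tangent sits in a narrow cone $C^\pm_b$. The only cosmetic difference is that the paper proves openness of $I^-(p,M)$ first (so the FDTL witness curve already starts at $q$ and Lemma~\ref{chart} applies directly with $\partial_0$ future-directed), whereas you treat $I^+(p,M)$ and therefore need the time-reversal $\tilde{\gamma}$ and the accompanying identification of $\xi$ with $C^-_b$; your orientation bookkeeping handles this correctly.
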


\begin{proof}
    We show that $I^-(p,M)$ is open in $M$, the proof for $I^+(p,M)$ follows analogously.
    Let $q\in I^-(p,M)$ and let $\gamma:[0,1]\rightarrow M$ be a FDTL curve with $\gamma(0)=q$ and $\gamma(1)=p$. We can use Lemma \ref{chart}, to get an according chart $\varphi:U\rightarrow(-\varepsilon,\varepsilon)^{d+1}$ such that $\varphi(q)=(0,...,0)$. Furthermore, since $\frac{5}{6}>\frac{1}{\sqrt{2}}$ it holds that $C^-_{\nicefrac{5}{6}}\subseteq C^-_{\nicefrac{1}{\sqrt{2}}}$. So we can choose $\delta>0$ so small, such that all vectors in $C^-_{\nicefrac{5}{6}}$ are timelike and past directed for all points in the chart $\varphi$.\\
    It is now easy to see that we can find a neighborhood of $(0,...,0)$ in $(\frac{\epsilon}{2},0,...,0)+C^-_{\nicefrac{5}{6}}$.
    This corresponds via the chart to some neighborhood $V\subseteq U$ of $q$ and for all $x\in V$ the straight line from $\varphi(x)$ to $(\frac{\epsilon}{2},0,...,0)$ corresponds to a smooth timelike curve from $x$ to $\gamma(\frac{\epsilon}{2})$ in $U$. Concatenating this curve with $\gamma|_{(\frac{\varepsilon}{2},1]}$ results in a FDTL curve from $x$ to $p$, i.e. $V\subseteq I^-(p,M)$, which concludes the proof.
\end{proof}

This proof can be written down in a more elegant way, but since we will use similar arguments in the future, it is a good place here to introduce them.\\
\\
In the following we will illustrate the usefulness of the notation introduced above. Since $\frac{5}{8}<\frac{1}{\sqrt{2}}<\frac{5}{6}$, we have $C^+_{\nicefrac{5}{6}}\subseteq C^+_{\nicefrac{1}{\sqrt{2}}}\subseteq C^+_{\nicefrac{5}{8}}$. Consider a chart $\Tilde{\varphi}$ as in Lemma \ref{chart} and denote by $R_{\varepsilon_0,\varepsilon_1}:=(-\varepsilon_0,\varepsilon_0)\times(-\varepsilon_1,\varepsilon_1)^d$ the image of this chart. Choose $\delta_0>0$ so small, such that in the chart $\Tilde{\varphi}$ all vectors in $C^\pm_{\nicefrac{5}{6}}$ are timelike and all vectors in in $C^c_{\nicefrac{5}{8}}$ are spacelike.
We prove the following estimates for all $x\in R_{\varepsilon_0,\varepsilon_1}$:\\
    \begin{equation}\label{causalitybounds+}
        (x+C^+_{\nicefrac{5}{6}})\cap R_{\varepsilon_0,\varepsilon_1} \subseteq I^+(x,R_{\varepsilon_0,\varepsilon_1})\subseteq (x+C^+_{\nicefrac{5}{8}})\cap R_{\varepsilon_0,\varepsilon_1}
    \end{equation}
    \begin{equation}\label{causilitybounds-}
        (x+C^-_{\nicefrac{5}{6}})\cap R_{\varepsilon_0,\varepsilon_1} \subseteq I^-(x,R_{\varepsilon_0,\varepsilon_1})\subseteq (x+C^-_{\nicefrac{5}{8}})\cap R_{\varepsilon_0,\varepsilon_1}
    \end{equation}
    \\
    We will only show the second inclusion relation of (\ref{causalitybounds+}), the first one follows trivially since $C^+_{\nicefrac{5}{6}}$ is always included in the light cone of the metric. (\ref{causilitybounds-}) then follows by reversing the time orientation. This was also proven in Theorem 3.1 in \cite{Schwarzschild2016}.\\
    Let $\sigma:[0,L]\rightarrow R_{\varepsilon_0,\varepsilon_1}$ be a FDTL curve with $\sigma(0)=x$ that is parameterised by the arc-length with respect to the Euclidean metric on $\mathbb{R}^{d+1}$, i.e. $L=L_{Euclidean}(\sigma)>0$ is the Euclidean length of the curve $\sigma$. First, we assume that $\sigma$ is smooth. We know that $\dot{\sigma}(s)\in C^+_{\nicefrac{5}{8}}$ for all $s\in [0,L]$, since $\sigma$ is FDTL. Then\\
    \[\frac{<\sigma(L)-x,e_0>_{\mathbb{R}^{d+1}}}{\|\sigma(L)-x\|_{\mathbb{R}^{d+1}}} = \frac{\int^L_0 <\sigma(s),e_0>_{\mathbb{R}^{d+1}}ds}{\|\sigma(L)-x\|_{\mathbb{R}^{d+1}}}> \frac{5}{8}\frac{L}{\|\sigma(L)-x\|_{\mathbb{R}^{d+1}}}\geq \frac{5}{8}\]
    
    where we have used that $\sigma$ is parameterised by arc-length, so $L\geq \|\sigma(L)-x\|_{\mathbb{R}^{d+1}}$. Thus we get $\sigma(L)\in x+C^+_{\nicefrac{5}{8}}$. In the case where $\sigma$ is piecewise smooth we can split the integral into a sum of integrals of smooth segments of $\sigma$. This proves (\ref{causalitybounds+}) and (\ref{causilitybounds-}).

\begin{proposition}\label{proposition2}
    Let $(M,g)$ be a Lorentzian manifold with continuous metric $g$ and $\gamma:[0,1]\rightarrow M$ be FDTL. Then it holds:
    \[I^+(\gamma(0),M)=\bigcup_{0<s\leq 1} I^+(\gamma(s),M)\]
\end{proposition}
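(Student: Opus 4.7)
The plan is to prove the two inclusions separately. The direction $\supseteq$ is the routine one: if $q \in I^+(\gamma(s), M)$ for some $s \in (0, 1]$, then concatenating the FDTL restriction $\gamma|_{[0,s]}$ with any FDTL curve from $\gamma(s)$ to $q$ yields a piecewise smooth curve whose one-sided derivatives are all future-directed timelike, with the two-sided condition at the new breakpoint $\gamma(s)$ already satisfied by $\dot\gamma(s^-)$.

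For the nontrivial inclusion $\subseteq$, let $q \in I^+(\gamma(0), M)$ and pick a FDTL curve $\sigma : [0,1] \to M$ with $\sigma(0) = \gamma(0)$ and $\sigma(1) = q$. The core move is to apply Lemma \ref{chart} to $\sigma$ (not to $\gamma$), obtaining, after reparameterisation, a chart $\varphi : U \to (-\varepsilon, \varepsilon)^{d+1}$ around $\gamma(0)$ with $\delta > 0$ chosen small enough that the causality bounds \eqref{causalitybounds+}--\eqref{causilitybounds-} hold inside $U$, and in which $\varphi(\sigma(t)) = (t, 0, \ldots, 0)$ for $t \in [0, \varepsilon)$.

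Fix the interior waypoint $p^* := \sigma(\varepsilon/2)$, whose $\varphi$-image is $(\varepsilon/2, 0, \ldots, 0)$. Then $\varphi(\gamma(0)) - \varphi(p^*) = -\tfrac{\varepsilon}{2}e_0$ has normalised Euclidean inner product $-1$ with $e_0$, so the origin lies strictly inside the open cone $\varphi(p^*) + C^-_{\nicefrac{5}{6}}$. By continuity of $\gamma$ at $0$ and openness of this cone, there is some $s_0 \in (0, 1]$ such that $\gamma([0, s_0]) \subseteq U$ and $\varphi(\gamma(s)) \in \varphi(p^*) + C^-_{\nicefrac{5}{6}}$ for all $s \in [0, s_0]$. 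The lower causality bound \eqref{causilitybounds-} gives $\gamma(s) \in I^-(p^*, U) \subseteq I^-(p^*, M)$, i.e.\ $\gamma(s) \ll p^*$. Since also $p^* \ll q$ via $\sigma|_{[\varepsilon/2, 1]}$, concatenation yields $\gamma(s) \ll q$, and hence $q \in I^+(\gamma(s), M)$ for every $s \in (0, s_0]$.

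The main obstacle, and also the decisive structural choice, is to build the chart around $\sigma$ rather than around $\gamma$. A chart adapted to $\gamma$ would only provide the one-sided bound $\varphi(\sigma(t)) \in C^+_{\nicefrac{5}{8}}$ on the arbitrary curve $\sigma$, which is too weak to force any point of $\sigma$ into the narrow forward cone $\gamma(s) + C^+_{\nicefrac{5}{6}}$ needed to invoke \eqref{causalitybounds+}. Swapping the roles pins down a single interior reference point $p^*$ that lies strictly inside the narrow past cone of $\gamma(0)$, after which only continuity of $\gamma$ at $0$ is required to finish.
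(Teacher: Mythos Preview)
Your proof is correct, but it is more elaborate than necessary. For the inclusion $\subseteq$, the paper simply observes that $q\in I^+(\gamma(0),M)$ means $\gamma(0)\in I^-(q,M)$, and then invokes Proposition~\ref{futureisopen} (openness of the timelike past) together with continuity of $\gamma$ to obtain $\gamma(s)\in I^-(q,M)$ for some small $s>0$. That is the entire argument.

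What you have done is unfold the proof of Proposition~\ref{futureisopen} inline: your chart around $\sigma$, the waypoint $p^*=\sigma(\varepsilon/2)$, and the cone $C^-_{5/6}$ are exactly the ingredients used in the paper to establish openness of $I^\pm$ in the first place. So the two arguments are the same at heart; you have just re-derived an already-available lemma in this particular case. Your closing paragraph about the ``main obstacle'' and the need to adapt the chart to $\sigma$ rather than $\gamma$ is therefore somewhat beside the point: once Proposition~\ref{futureisopen} is on record, no chart is needed here at all.
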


\begin{proof}
    The inclusion \textquotedblleft$\supseteq$\textquotedblright\, is obvious.
    For \textquotedblleft$\subseteq$\textquotedblright:\, let $p\in I^+(\gamma(0),M)$, so $\gamma(0)\in I^-(p,M)$. However, the past of $p$ is open in $M$ by Proposition \ref{futureisopen}. Since $\gamma$ is continuous there exists some $s\in (0,1]$ (close to $0$), such that $\gamma(s)\in I^-(p,M)$ and thus $p\in I^+(\gamma(s),M)$.
\end{proof}

\begin{definition}
    Let (M,g) be a time-oriented Lorentzian manifold with continuous metric.
    The \textup{Lorentzian length} of a future directed causal curve $\gamma:[0,1]\rightarrow M$ is \[L(\gamma)=\int^1_0 \sqrt{-g(\dot{\gamma}(s),\dot{\gamma}(s))}ds\]
    The \textup{time separation} or \textup{Lorentzian distance function} $\tau:M\times M\rightarrow[0,\infty]$ is defined by
    \[\tau(p,q)=\left\{\begin{array}{ll} \sup\{L(\gamma)\,|\,\gamma:[0,1]\rightarrow M \, future\,directed, \gamma(0)=p, \gamma(1)=q  &  if\,\,q\in I^+(p,M)\\
    0 & if\,\,q\notin I^+(p,M)\end{array}\right.\]
\end{definition}

\begin{definition}
    Let (M,g) be a time-oriented Lorentzian manifold with continuous metric and let $I\subseteq\mathbb{R}$ be one of the intervals $(a,b),(a,b],[a,b)$ with $a<b$.
    We call a FDTL $\gamma:I\rightarrow M$ \textup{future (past) extendible} if $b\notin I$ $(a\notin I)$ and $\gamma$ can be extended to $I\cup\{b\}$ $(I\cup\{a\})$ as a continuous curve.
    Moreover, $\gamma$ is called \textup{inextendible} if it is future and past inextendible.
\end{definition}

Note that this definition allows for a FDTL curve to be future extendible, but only as a continuous curve and not as a timelike curve.

\begin{definition}
    A subset $S\subseteq M$ of a time-oriented Lorentzian manifold with continuous metric is called \textup{achronal}, if every inextendible timelike curve meets it at most once.
\end{definition}

\begin{definition}
    Let (M,g) be a time-oriented Lorentzian manifold with continuous metric.
    \begin{itemize}
        \item A \textup{Cauchy hypersurface} in $(M,g)$ is a smooth embedded hypersurface $\Sigma$ which is met by every inextendible timelike curve exactly once.\\
        The terms Cauchy hypersurface and Cauchy surface are used interchangeably.
        \item $(M,g)$ is called \textup{globally hyperbolic} if there exists a Cauchy surface $\Sigma$ in $(M,g)$.
    \end{itemize}
\end{definition}

Note that there are multiple equivalent definitions of global hyperbolicity. Since we will only use the existence of a Cauchy hypersurface going forward, this definition suffices.\\
The following two definitions are not particularly well known. However, they will be useful later, when proving the main theorem \ref{maintheorem}.

\begin{definition}
    Let (M,g) be a time-oriented Lorentzian manifold with continuous metric.
    \begin{itemize}
        \item A \textup{timelike homotopy with fixed endpoints} between two FDTL curves $\gamma^0,\gamma^1:[a,b]\rightarrow M$, with $\gamma^0(a)=\gamma^1(a)$ and $\gamma^0(b)=\gamma^1(b)$ $(a<b\in\mathbb{R})$ is a continuous map $\Gamma:[0,1]\times[a,b]\rightarrow M$ such that $\Gamma(0,\cdot)=\gamma^0(\cdot)$ and $\Gamma(1,\cdot)=\gamma^1(\cdot)$ and $\Gamma(u,\cdot)$ is a FDTL curve from $\gamma^0(a)$ to $\gamma^0(b)$ for all $u\in[0,1]$.\\
        We call two FDTL curves \textup{timelike homotopic with fixed endpoints} if there exists a timelike homotopy with fixed endpoints between them.
        \item We call $(M,g)$ \textup{future one-connected} if for all $p,q\in M$, with $p\ll q$, any two FDTL curves from $p$ to $q$ are timelike homotopic with fixed endpoints.
    \end{itemize}
\end{definition}

\begin{definition}\label{timelike-separation}
    Let (M,g) be a time-oriented Lorentzian manifold with continuous metric and $A,B,K\subseteq M$. We say $K$ \textup{timelike-separates} $A$ and $B$ if, and only if, for every timelike curve $\gamma:[0,1]\rightarrow M$ with $\gamma(0)\in A$ and $\gamma(1)\in B$ there is some $\Tilde{s}\in[0,1]$ with $\gamma(\Tilde{s})\in K$.
\end{definition}

For simplicity, assume that all manifolds are connected.

\begin{definition}
    Let (M,g) be a time-oriented Lorentzian manifold with smooth metric.
    \begin{itemize}
        \item Let $k\in\mathbb{N}_0\cup\{\infty\}$. A smooth isometric embedding $\iota:M\hookrightarrow\Tilde{M}$ is called a \textup{$C^k$-extension} of $(M,g)$, if $\Tilde{M}$ is a Lorentzian manifold of the same dimension as $M$, $\partial\iota(M)\neq\emptyset$ and $\Tilde{g}$ is a $C^k$-regular metric.\\
        By slight abuse of terminology, $\Tilde{M}$ is also sometimes called the extension of M.
        \item We call $(M,g)$ \textup{$C^k$-extendible}, if there exists a $C^k$-extension of $(M,g)$. If no such extension exists, we call $(M,g)$ \textup{$C^k$-inextendible}.
    \end{itemize}
\end{definition}

To get more familiar with the concept of $C^0$-extensions, we list in the following a few examples of $C^0$-extensions:
\begin{itemize}
    \item Minkowski space with one point (a closed subset) removed $(\mathbb{R}^{d+1}_1\setminus\underline{0},\eta)$ can be isometrically embedded into Minkowski space  with the obvious map. As Minkowski space has a smooth metric, $(\mathbb{R}^{d+1}_1\setminus\underline{0},\eta)$ is even $C^\infty$-extendible.
    \item the Kerr metric (which we will not write down here) that models the behaviour of a rotating, uncharged, axially symmetric black hole with a quasispherical event horizon is also $C^\infty$-extendible.
    \item $((0,\infty)\times\mathbb{R},g)$ with $g=e^{2\sqrt{t}}(-dt^2+dx^2)$ is obviously $C^0$-extendible to $[0,\infty)\times\mathbb{R}$. However, computing the scalar curvature of $g$ gives $R=-\frac{1}{2e^{2\sqrt{t}}}\frac{1}{t^{\nicefrac{3}{2}}}$, so the scalar curvature blows up for $t\rightarrow 0$, i.e. it is $C^2$-inextendible.
\end{itemize}

The last result shows that curvature blow ups, that often detects $C^2$-inextendibility, do not help us when proving $C^0$-inextendibility. We need more sophisticated tools.\\
\\
In the following section we will develop such tools and results.

\subsection{Results for $C^0$-extensions}

\begin{lemma}\label{curveleavesM}
    Let $(M,g)$ be a time-oriented Lorentzian manifold with smooth metric. If there exists a $C^0$-extension $\iota:M\hookrightarrow\Tilde{M}$, then there exists a timelike curve $\Tilde{\gamma}:[0,1]\rightarrow \Tilde{M}$ such that $\Tilde{\gamma}([0,1))\subseteq\iota(M)$ and $\Tilde{\gamma}(1)\in\Tilde{M}\setminus\iota(M)$.
\end{lemma}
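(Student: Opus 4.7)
The plan is as follows. First, since a smooth embedding between manifolds of equal dimension is automatically an open map (by the inverse function theorem), $\iota(M)$ is open in $\tilde M$ and, by hypothesis, $\partial\iota(M)\neq\emptyset$. I pick any $p\in\partial\iota(M)$ and around $p$ I build a chart $\tilde\varphi:\tilde U\to(-\varepsilon,\varepsilon)^{d+1}$ of $\tilde M$ with $\tilde\varphi(p)=0$ and $\tilde g_{\mu\nu}(0)=\eta_{\mu\nu}$, exactly as in the second half of the proof of Lemma \ref{chart}: start from an arbitrary chart at $p$, apply Gram--Schmidt at $p$, and shrink by continuity of $\tilde g$ until the causality bounds \eqref{causalitybounds+}--\eqref{causilitybounds-} are available throughout $\tilde U$; in particular every vector in $C^+_{\nicefrac{5}{6}}$ is FDTL in $\tilde U$ and every vector in $C^c_{\nicefrac{5}{8}}$ is spacelike.

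Next, I want to produce a timelike chart-line starting inside $\iota(M)$ and ending outside. Since $p\in\partial\iota(M)$, $\iota(M)\cap\tilde U$ contains points arbitrarily close to $p$, and I split into three cases. Case (i): there is $q\in\iota(M)\cap\tilde U$ with $\tilde\varphi(q)\in C^-_{\nicefrac{5}{6}}$; then the straight chart-line $s\mapsto\tilde\varphi^{-1}\bigl((1-s)\tilde\varphi(q)\bigr)$ from $q$ to $p$ has constant chart-tangent $-\tilde\varphi(q)\in C^+_{\nicefrac{5}{6}}$, so it is FDTL from $q\in\iota(M)$ to $p\notin\iota(M)$. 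Case (ii): symmetrically, if some $q\in\iota(M)\cap\tilde U$ satisfies $\tilde\varphi(q)\in C^+_{\nicefrac{5}{6}}$, the reverse-parametrised line is PDTL and does the same job. Case (iii): otherwise there is an open neighborhood $V\subseteq\tilde U$ of $p$ on which $\iota(M)\cap V$ is entirely contained in $\tilde\varphi^{-1}(C^c_{\nicefrac{5}{6}})$, i.e., $\iota(M)$ accumulates at $p$ only through spacelike-ish directions. In that case I pick $q\in\iota(M)\cap V$ with $\|\tilde\varphi(q)\|$ much smaller than the radius of $V$ and consider instead the vertical FDTL line $\tilde\gamma(s)=\tilde\varphi^{-1}(\tilde\varphi(q)+s e_0)$. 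A short calculation shows that once $s$ exceeds a fixed multiple of the spatial norm of $\tilde\varphi(q)$, the point $\tilde\gamma(s)$ enters $\tilde\varphi^{-1}(C^+_{\nicefrac{5}{6}})\cap V$, which by the case assumption is disjoint from $\iota(M)$, so $\tilde\gamma$ must leave $\iota(M)$ before then.

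In every case I end up with a timelike curve $\tilde\gamma:[0,T]\to\tilde U$ (FDTL or PDTL) with $\tilde\gamma(0)\in\iota(M)$ and $\tilde\gamma(T)\notin\iota(M)$. I then set $t_0:=\inf\{s\in[0,T]:\tilde\gamma(s)\notin\iota(M)\}$, which is strictly positive by openness of $\iota(M)$ and satisfies $\tilde\gamma([0,t_0))\subseteq\iota(M)$ by definition; since $\tilde M\setminus\iota(M)$ is closed and $\tilde\gamma$ is continuous, $\tilde\gamma(t_0)\in\tilde M\setminus\iota(M)$. A linear reparametrisation of $\tilde\gamma|_{[0,t_0]}$ onto $[0,1]$ then provides the curve required by the lemma. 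The main obstacle I anticipate is the third case above: when $\iota(M)$ accumulates at $p$ only through ``spacelike'' directions, no straight chart-line from a nearby $q\in\iota(M)$ reaches $p$ as a timelike curve, and one must instead exploit that moving straight up from $q$ in chart coordinates quickly crosses into the tight future cone of $p$, which the case assumption has already emptied of $\iota(M)$-points.
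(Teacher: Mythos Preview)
Your argument is correct, but the paper's proof takes a cleaner route that avoids your three-case split entirely. Instead of asking where nearby $\iota(M)$-points lie relative to the cones at $p$, the paper simply picks any $q\in I^-(p,\tilde U)$ and branches on whether $q$ itself lies in $\iota(M)$: if yes, any FDTL curve from $q$ to $p$ must exit $\iota(M)$; if no, then because $I^+(q,\tilde U)$ is open and contains the boundary point $p$, it also contains some $r\in\iota(M)$, and any PDTL curve from $r$ to $q$ must exit $\iota(M)$. In either case the first-exit-time argument you give at the end applies. Your approach is more hands-on with the chart-cone geometry and does work, but the paper's trick of fixing a point $q$ in the timelike past of $p$ first --- rather than merely near $p$ --- sidesteps your awkward Case~(iii) entirely: once such a $q$ is fixed, openness of $I^+(q,\tilde U)$ automatically furnishes an $\iota(M)$-point in a timelike position relative to $q$, with no need to chase vertical chart-lines into cones.
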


\begin{proof}
    By definition we have $\partial\iota(M)\neq\emptyset$, let $p\in\partial\iota(M)\subseteq\Tilde{M}\setminus\iota(M)$. Choose a small neighborhood $\Tilde{U}$ of $p$ which is time-oriented. For $q\in I^-(p,\Tilde{U})$ we have two cases
    \begin{enumerate}
        \item $q\in\iota(M)$: Then there exists a FDTL curve $\Tilde{\gamma}:[0,1]\rightarrow \Tilde{U}$ with $\Tilde{\gamma}(0)=q\in\iota(M)$ and $\Tilde{\gamma}(1)=p\in\partial\iota(M)$. Since $\iota(M)\subseteq\Tilde{M}$ is open, we see that for
        \[s_0 := \sup_{s\in[0,1]}\,\big\{\,\Tilde{\gamma}([0,s))\subseteq\iota(M)\big\}\]
        we get $\Tilde{\gamma}(s_0)\in\Tilde{M}\setminus \iota(M)$. This gives the timelike curve from the statement after a possible reparametrization.
        \item $q\in\Tilde{M}\setminus\iota(M)$: Since $p\in I^+(q,\Tilde{U})\cap\partial\iota(M)$ and since $I^+(q,\Tilde{U})$ is open, there exists some $r\in I^+(q,\Tilde{U})\cap\iota(M)$. So there exists a PDTL curve $\Tilde{\gamma}:[0,1]\rightarrow \Tilde{U}$ such that $\Tilde{\gamma}(0)=r$ and $\Tilde{\gamma}(1)=q$ and we get the desired timelike curve analogous to the previous case.
    \end{enumerate}
\end{proof}

Note that the proof of the previous lemma shows that if there exists a $C^0$-extension, then there exists a timelike curve leaving the original manifold. However, whether this curve is future or past directed is a priori not clear. 

\begin{definition}
    We call the \textup{future boundary} of a $C^0$-extension $\iota:M\hookrightarrow\Tilde{M}$ the set
    $\partial^+\iota(M)$, which consists of all $p\in\partial\iota(M)$ for which there exists a smooth timelike curve $\gamma:[0,1]\rightarrow\Tilde{M}$ with $\gamma([0,1))\subset\iota(M)$, $\iota^{-1}\circ\gamma:[0,1)\rightarrow M$ is FDTL and $\gamma(1)=p$.
\end{definition}

The past boundary $\partial^-\iota(M)$ is defined analogously.\\
By Lemma \ref{curveleavesM} it is clear that $\partial^+\iota(M)\cup \partial^-\iota(M)\neq\emptyset$.
Furthermore, we clearly have  $\partial^+\iota(M)\cup \partial^-\iota(M)\subseteq\partial\iota(M)$. However, we can not make any general statements about $\partial^+\iota(M)\cap\partial^-\iota(M)$.

\begin{definition}
    We call a $C^0$-extension $\iota:M\hookrightarrow\Tilde{M}$ of a time-oriented Lorentzian manifold with smooth metric \textup{future $C^0$-extension} or \textup{past $C^0$-extension}, if $\partial^+\iota(M)\neq\emptyset$ or $\partial^-\iota(M)\neq\emptyset$ respectively.\\
    As usual, we call a time-oriented Lorentzian manifold with smooth metirc \textup{$C^0$-future/past-inextendible}, if there exists no future/past-extension.
\end{definition}

Lemma \ref{curveleavesM} clearly implies that $M$ is $C^0$-inextendible if, and only if, $M$ is future and past-inextendible.

\subsection{$C^0$-extensions of globally hyperbolic Lorentzian manifolds}
Now, we turn our attention to the important case of $C^0$-extensions of globally hyperbolic time-oriented Lorentzian manifolds. The following proof follows Proposition 1 in \cite{Schwarzschild2018}.

\begin{lemma}\label{lemmagraph}
    Let $\iota:M\hookrightarrow \Tilde{M}$ be a $C^0$-extension of a globally hyperbolic, time-oriented Lorentzian manifold $(M,g)$ with $p\in\partial^-\iota(M)$. For every $\delta>0$ there exists $\varepsilon_0,\varepsilon_1>0$ and a chart $\Tilde{\varphi}:\Tilde{U}\rightarrow (-\varepsilon_0,\varepsilon_0)\times(-\varepsilon_1,\varepsilon_1)^d$ such that:
    \begin{enumerate}
        \item $p\in\Tilde{U}$ and $\Tilde{\varphi}(p)=(0,...,0)$
        \item $|\Tilde{g}_{\mu\nu}-\eta_{\mu\nu}|<\delta$
        \item There exists a Lipschitz continuous function $f:(-\varepsilon_1,\varepsilon_1)^d\rightarrow(-\varepsilon_0,\varepsilon_0)$ such that:\\
        \begin{equation}\label{graph1}
            \{(x_0,\underline{x})\in(-\varepsilon_0,\varepsilon_0)\times(-\varepsilon_1,\varepsilon_1)^d\,|\, x_0>f(\underline{x})\}\subseteq \Tilde{\varphi}(\iota(M)\cap\Tilde{U})
        \end{equation}
        and \begin{equation}\label{graph2}
            \{(x_0,\underline{x})\in(-\varepsilon_0,\varepsilon_0)\times(-\varepsilon_1,\varepsilon_1)^d\,|\, x_0=f(\underline{x})\}\subseteq \Tilde{\varphi}(\partial^-\iota(M)\cap\Tilde{U})
        \end{equation}
        Moreover, the graph of f is achronal in $(-\varepsilon_0,\varepsilon_0)\times(-\varepsilon_1,\varepsilon_1)^d$.
    \end{enumerate}
\end{lemma}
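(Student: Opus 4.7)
The plan is to bring the problem into the Minkowskian chart supplied by Lemma~\ref{chart} and identify $\tilde\varphi(\iota(M)\cap\tilde U)$ as the epigraph of a Lipschitz function $f$. The chart is produced as follows: since $p\in\partial^-\iota(M)$, the definition of $\partial^-\iota(M)$ yields a PDTL curve in $\iota(M)$ whose extension to $\tilde M$ limits to $p$; reversing its parametrisation gives a smooth FDTL curve $\tilde\gamma:[0,1]\to\tilde M$ with $\tilde\gamma(0)=p$ and $\tilde\gamma((0,1])\subset\iota(M)$. Applying Lemma~\ref{chart} to $\tilde\gamma$ with $\delta$ small enough that the causality bounds (\ref{causalitybounds+}) and (\ref{causilitybounds-}) hold throughout, and then restricting the chart to a rectangle $R:=(-\varepsilon_0,\varepsilon_0)\times(-\varepsilon_1,\varepsilon_1)^d$ with $\varepsilon_1\ll\varepsilon_0$, at once delivers items~(1) and~(2) and guarantees that the segment $(0,\varepsilon_0)\times\{0\}$ lies in $A:=\tilde\varphi(\iota(M)\cap\tilde U)$.

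I would then set $f(\underline x):=\inf\{x_0\in(-\varepsilon_0,\varepsilon_0)\mid (x_0,\underline x)\in A\}$ and reduce all of item~(3) to the future-closure property
\[
  (\ast)\qquad q\in A \;\Longrightarrow\; I^+(q,R)\,\subseteq\, A.
\]
Granting $(\ast)$, specialising to the vertical FDTL vectors $q+te_0$ with $t>0$ shows that each vertical slice of $A$ is an upward half-interval $(f(\underline x),\varepsilon_0)$, which is exactly (\ref{graph1}). The point $(f(\underline x),\underline x)$ does not belong to the open set $A$, while the vertical ray above it lies in $A$; reversing that ray produces a PDTL curve in $M$ limiting to $\tilde\varphi^{-1}(f(\underline x),\underline x)$ in $\tilde M$, so this point sits in $\partial^-\iota(M)$ and (\ref{graph2}) follows. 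For Lipschitz continuity, if $f(\underline x^1)-f(\underline x^2)$ exceeded the multiple of $\|\underline x^1-\underline x^2\|$ corresponding to the slope of $C^+_{\nicefrac{5}{6}}$, the higher graph point would lie in $C^+_{\nicefrac{5}{6}}$ of the lower; perturbing the lower graph point upward by a small $\eta e_0$ puts it into $A$ with the higher graph point still in its $C^+_{\nicefrac{5}{6}}$-future, and the inclusion $q+C^+_{\nicefrac{5}{6}}\subseteq I^+(q,R)$ combined with $(\ast)$ then forces the higher graph point into $A$, contradicting its definition. Achronality of the graph follows by the same perturbation argument applied through Proposition~\ref{proposition2}: any timelike curve between two graph points would place the later point in $I^+(P_{\mathrm{earlier}}+\eta e_0,R)$ for small $\eta>0$, again contradicting $(\ast)$.

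The main obstacle is thus establishing $(\ast)$, and this is the only place where global hyperbolicity of $M$ is truly used. The natural approach is to take a FDTL curve $\sigma$ in $R$ from $q\in A$ to $q'\in I^+(q,R)$, pull it back via $\iota^{-1}$, and consider $s_\ast:=\sup\{s\mid \sigma([0,s])\subset A\}$; if $s_\ast<1$, then $q_\ast:=\sigma(s_\ast)$ is by construction a point of $\partial^+\iota(M)$. The plan for deriving a contradiction is to combine $\tilde\gamma$ (running from $p\in\partial^-\iota(M)$ into $\iota(M)$) with the FDTL curve witnessing $q_\ast\in\partial^+\iota(M)$ -- after shrinking $\tilde U$ enough that both can be joined inside $\iota(M)$ -- to produce an inextendible FDTL curve in $M$ which either fails to meet a Cauchy hypersurface $\Sigma\subset M$ or meets it more than once, contradicting the defining property of $\Sigma$. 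Making this chaining rigorous, in particular controlling how the chart interacts with a hypothetical piece of $\partial^+\iota(M)$ sitting near $p$, is the genuinely difficult part of the proof; once it is in place, everything else in the statement is routine Minkowskian bookkeeping.
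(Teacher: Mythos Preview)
Your reduction of item~(3) to the future-closure property $(\ast)$ is correct, and once $(\ast)$ holds the Lipschitz bound, the achronality, and the inclusions (\ref{graph1})--(\ref{graph2}) follow exactly as you describe. This bookkeeping is essentially the same as the paper's.

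The genuine gap is in your plan for $(\ast)$. You propose to concatenate $\tilde\gamma$ with the FDTL segment $\sigma|_{[0,s_\ast)}$ approaching $q_\ast\in\partial^+\iota(M)$, but this runs into two problems. First, joining them requires a FDTL connector lying in $\iota(M)$, and guaranteeing that such a connector exists is precisely the content of $(\ast)$---so the argument is circular, and ``shrinking $\tilde U$'' does not break the circle because the hypothetical $q_\ast$ shrinks along with the chart. Second, even if you produced an inextendible FDTL curve in $M$, that alone is no contradiction: such a curve \emph{does} meet $\Sigma$ exactly once, by definition; you would still need to explain why this particular curve cannot.

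The paper sidesteps both issues by changing the set in the definition of $f$: instead of $A=\tilde\varphi(\iota(M)\cap\tilde U)$ it uses the smaller $\tilde\varphi(\iota(I^-(\Sigma,M))\cap\tilde U)$, after shrinking so that a top slab $[s_0,\varepsilon_0)\times(-\varepsilon_1,\varepsilon_1)^d$ already lies in this set. The crucial feature of $I^-(\Sigma,M)$ is that membership is preserved when moving \emph{backward} along timelike curves. The paper therefore does not try to propagate forward from a point of $A$; it propagates backward from a vertical line known to lie in $\iota(I^-(\Sigma,M))$ via a PDTL straight-line foliation. Any obstruction---a point $q\notin\iota(M)$ encountered on this foliation---then yields a FDTL curve approaching $q$ that lies entirely in $I^-(\Sigma,M)$ and is future-inextendible in $M$; that directly contradicts $\Sigma$ being Cauchy, with no concatenation needed. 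Replacing your concatenation idea with this $I^-(\Sigma,M)$ mechanism is what is missing from the proposal.
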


Going forward we will use the abbreviation $R_{\varepsilon_0,\varepsilon_1}:=(-\varepsilon_0,\varepsilon_0)\times(-\varepsilon_1,\varepsilon_1)^d$

\begin{proof}
    By definition of the past boundary and $p\in\partial^-\iota(M)$, there exists a timelike curve $\Tilde{\gamma}:[0,1]\rightarrow\Tilde{M}$ such that $\Tilde{\gamma}((0,1])\subseteq \iota(M)$, $\Tilde{\gamma}|_{(0,1]}$ is FDTL and $\Tilde{\gamma}(0)=p\in\partial^-\iota(M)$.
    We can reparameterise the curve by Lemma \ref{chart} for all $\delta>0$ and find a chart $\Tilde{\varphi}:\Tilde{U}\rightarrow (-\varepsilon_0,\varepsilon_0)\times(-\varepsilon_1,\varepsilon_1)^d$ such that
    \begin{itemize}
        \item $(\Tilde{\varphi}\circ\Tilde{\gamma})(s)=(s,0,..,0)$ for all $s\in[0,\varepsilon_0)$
        \item $|\Tilde{g}_{\mu\nu}-\eta_{\mu\nu}|<\delta$
    \end{itemize}

    Since $\frac{5}{6}>\frac{1}{\sqrt{2}}$, we can choose $\delta>0$ so small that the backwards cone $C^-_{\nicefrac{5}{6}}$ is always contained in the light cone of $\Tilde{g}$, i.e. all vectors in $C^-_{\nicefrac{5}{6}}$ are timelike for all points of the chart $\Tilde{\varphi}$.\\
    \\
    Since $M$ is globally hyperbolic there exists a Cauchy hypersurface $\Sigma\subseteq M$. Note that $\gamma:=\iota^{-1}\circ\Tilde{\gamma}|_{(0,1]}$ is a past inextendible timelike curve in $M$, so we can find a $s_0>0$ close to 0 such that $\gamma(s_0)\in I^-(\Sigma,M)$. Since $I^-(\Sigma,M)$ is open we can choose $\varepsilon_0,\varepsilon_1>0$ smaller, if necessary, such that $[s_0,\varepsilon_0)\times(-\varepsilon_1,\varepsilon_1)^d\subseteq \Tilde{\varphi}(\iota(I^-(\Sigma,M))\cap\Tilde{U})$.\\
    By assuming that $\varepsilon_1<\frac{1}{2\sqrt{d}}\varepsilon_0$ we can guarantee that for all $\underline{x}\in(-\varepsilon_1,\varepsilon_1)^d$ the straight line connection $(-\frac{9}{10}\varepsilon_0,\underline{x})$ with $\underline{0}$ is timelike. Indeed, even the straight line connecting $(-\frac{9}{10}\varepsilon_0,\varepsilon_1,..,\varepsilon_1)$ and $\underline{0}$ is timelike since

    \[\frac{5}{6}<\frac{\frac{9}{10}\varepsilon_0}{\sqrt{(\frac{9}{10}\varepsilon_0)^2+d\varepsilon_1^2}} \iff \varepsilon_1<\sqrt{\frac{11}{25}}\frac{9}{10} \frac{1}{\sqrt{d}} \varepsilon_0\]
    where the right hand side is definitely true, since even $\varepsilon_1<\frac{1}{2\sqrt{d}}\varepsilon_0$ holds, this shows that $(\frac{9}{10}\varepsilon_0,\varepsilon_1,...,\varepsilon_1)\in C^+_{\nicefrac{5}{6}}$.\\
    \\
    Now define $f:(-\varepsilon_1,\varepsilon_1)^d\rightarrow [-\varepsilon_0,\varepsilon_0)$ as follows
    \[f(\underline{x})=\inf_{s_0\in(-\varepsilon_0,\varepsilon_0)} \{(s,\underline{x})\in\Tilde{\varphi}(\iota(I^-(\Sigma,M)) \cap \Tilde{U})\,\,\,\,\forall\, s\in(s_0,\varepsilon_0)\}\]

    First, we show that $f(\underline{x})>-\varepsilon_0$ for all $\underline{x}\in(-\varepsilon_1,\varepsilon_1)^d$.\\
    We argue by contradiction: assume there exists a $\underline{x}_0\in(-\varepsilon_1,\varepsilon_1)^d$ such that $f(\underline{x}_0)=-\varepsilon_0$.
    By the way we defined $f(\underline{x})$ it holds that $(-\frac{9}{10}\varepsilon_0,\underline{x})\in\Tilde{\varphi}(\iota(I^-(\Sigma,M)) \cap \Tilde{U})$ and from the discussion above we know that the straight line $\sigma=(\sigma_0,\underline{\sigma}):[0,1]\rightarrow R_{\varepsilon_0,\varepsilon_1}$ connecting $\sigma(0)=(-\frac{9}{10}\varepsilon_0,\underline{x})$ and $\sigma(1)=0$ is timelike.\\
    We want to show that $\sigma|_{[0,1)}\subseteq \Tilde{\varphi}(\iota(I^-(\Sigma,M)) \cap \Tilde{U})$.\\
    For this we (partially) foliate the plane
    \[\{(t,\underline{\sigma}(s))\in R_{\varepsilon_0,\varepsilon_1}\,|\,t\geq\sigma_0(s)\,,\,s\in[0,1)\}\]
    by (closed) straight lines $\rho_\tau$ that start in $(\tau,\underline{x}_0)$ and end in $\sigma$ with slope $(-\frac{9}{10}\varepsilon_0,-\underline{x}_0)$.\\
    \\
    Note that $(-\frac{9}{10}\varepsilon_0,\underline{x})\in\Tilde{\varphi}(\iota(I^-(\Sigma,M)) \cap \Tilde{U}) \subseteq \Tilde{\varphi}(\iota(M) \cap \Tilde{U})$ and since $\Tilde{\varphi}(\iota(M) \cap \Tilde{U})$ is open, there is some $\Tilde{\tau}\in(-\frac{9}{10}\varepsilon_0,\frac{9}{10}\varepsilon_0)$ such that
    \begin{equation}\label{rho-tau}
        \rho_\tau\subseteq \Tilde{\varphi}(\iota(M) \cap \Tilde{U})\,\,\,\forall \tau\in(-\frac{9}{10}\varepsilon_0,\Tilde{\tau})
    \end{equation}
    And let $\tau_0:=\sup\{\Tilde{\tau}\in(-\frac{9}{10}\varepsilon_0,\frac{9}{10}\varepsilon_0)\,|\,(\ref{rho-tau})\,holds\}$. Note that $\tau_0=\frac{9}{10}\varepsilon_0$.\\
    This holds since $\tau_0<\frac{9}{10}\varepsilon_0$ implies that there exists some $q=(q_0,\underline{q})$ on $\rho_{\tau_0}$ such that $q\notin \Tilde{\varphi}(\iota(M)) \cap \Tilde{U})$. However since lines with slope $(-\frac{9}{10}\varepsilon_0,-\underline{x}_0)$ are timelike it follows that the straight line connecting $(-\frac{9}{10}\varepsilon_0,\underline{x})$ to $q$ corresponds to an future directed timelike and future inextendible  curve entirely contained in $\Tilde{\varphi}(\iota(I^-(\Sigma,M)) \cap \Tilde{U})$, which is a contradiction to $\Sigma$ being a Cauchy surface. See Figure \ref{fig:graph} on the next page.\\
    Thus $\tau_0=\frac{9}{10}\varepsilon_0$, which shows that $\sigma|_{[0,1)}\subseteq \Tilde{\varphi}(\iota(I^-(\Sigma,M)) \cap \Tilde{U})$. This however is also a future directed timelike and future inextendible curve in  $\Tilde{\varphi}(\iota(I^-(\Sigma,M)) \cap \Tilde{U})$, so we get the same contradiction. This proves that $f$ really maps into $(-\varepsilon_0,\varepsilon_0)$.\\

    \begin{figure}[!htp]
    \centering
    \includegraphics[width=0.43\columnwidth]{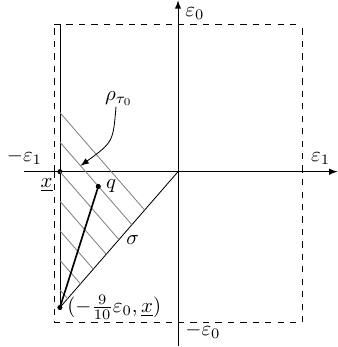}
    \caption{The foliation and curve}
    \label{fig:graph}
    \end{figure}
    
    The properties (\ref{graph1}) and (\ref{graph2}) are obvious by the way we defined $f$.\\
    We use a similar argument as before to show that $f$ is continuous: Let $\underline{x}_n\in(-\varepsilon_0,\varepsilon_0)^d$ be a converging sequence $\underline{x}_n\rightarrow\underline{x}_\infty \in(-\varepsilon_0,\varepsilon_0)^d$. If $f(\underline{x}_n)\nrightarrow f(\underline{x}_\infty)$, then there exists a $\delta>0$ (and possibly a subsequence) such that $|f(\underline{x}_n)-f(\underline{x}_\infty)|>\delta$ for all $n\in\mathbb{N}$.\\
    After possibly restricting again to a subsequence we can without loss of generality assume that $f(\underline{x}_n)<f(\underline{x}_\infty)-\delta$, the case $f(\underline{x}_n)>f(\underline{x}_\infty)+\delta$ follows analogously.
    For $n$ big enough we can connect $(f(\underline{x}_n)+\frac{\delta}{2},\underline{x}_n)$ to $(f(\underline{x}_\infty),\underline{x}_\infty)$ via a straight line that is future directed timelike. This gives the previous contradiction.\\
    The same argument can be also use to show that $f$ is even Lipschitz continuous: Assume $f$ is not Lipschitz, i.e. for all $n\in\mathbb{N}$ there are $x_n,y_n\in(-\varepsilon_0,\varepsilon_0)$ such that $|f(x_n)-f(y_n)|>n|x_n-y_n|$. So the straight line connecting $(f(x_n),x_n)$ and $(f(y_n),y_n)$ has slope more than $n$ (or less than $-n$). Since the straight lines from $\underline{0}$ to $(\frac{9}{10}\varepsilon_0,\underline{x})$ are always timelike, for all $\underline{x}\in(-\varepsilon_1,\varepsilon_1)^d$, we can find some $n\in\mathbb{N}$ such that the line connecting $(f(x_n)+\delta,x_n)$ to $(f(y_n),y_n)$ is timelike for some $\delta>0$. This gives the usual contradiction.\\
    We can even use this argument to prove that the graph of $f$ is achronal in $R_{\varepsilon_0,\varepsilon_1}$. Assume it is not achronal, i.e. there are $x,y$ in the graph of $f$ such that $x=(x_0,\underline{x})\in I^-(y, R_{\varepsilon_0,\varepsilon_1})$. Since $I^-(y, R_{\varepsilon_0,\varepsilon_1})$ is open, there is a $\delta>0$ such that $(x_0+\delta,\underline{x})\in I^-(y, R_{\varepsilon_0,\varepsilon_1})\cap\Tilde{\varphi}(\iota(I^-(\Sigma,M))\cap\Tilde{U})$.
    This allows us to again construct a future directed timelike, future inextendible curve curve in $I^-(\Sigma,M)$, so we get the same contradiction.
\end{proof}

\subsection{Auxiliary results}
The following theorem was proven in \cite{GallowayLingSbierski}.

\begin{theorem}
    Let $(M,g)$ be a globally hyperbolic Lorentzian manifold. Assume there exists a past $C^0$-extension $\iota:M\hookrightarrow\Tilde{M}$, i.e. $\partial^-\iota(M)\neq\emptyset$. Then there exists a FDTL and past inextendible geodesic $\tau:(0,1]\rightarrow M$ such that $\lim_{s\rightarrow 0} (\iota\circ\tau)(s)$ exists and is contained in $\partial\iota(M)$.
\end{theorem}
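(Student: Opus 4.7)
The plan is to construct $\tau$ as a limit of maximizing timelike geodesics in $M$ whose past endpoints accumulate at $p$. First I apply Lemma \ref{lemmagraph} around $p \in \partial^-\iota(M)$ to obtain a chart $\tilde\varphi : \tilde U \to R_{\varepsilon_0,\varepsilon_1}$ in which $|\tilde g_{\mu\nu}-\eta_{\mu\nu}|<\delta$ for some small $\delta>0$, and the past boundary is a Lipschitz graph $f$ with $\tilde\varphi(p)=0$. Using the witness curve $\tilde\gamma$ from the definition of $\partial^-\iota(M)$, I fix a reference point $q := \tilde\gamma(\varepsilon_0/2) \in \iota(M)$ and an approaching sequence $r_n := \tilde\gamma(1/n) \to p$. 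The causality bound (\ref{causalitybounds+}) applied to the vertical chart segments from $\tilde\varphi(r_n)$ to $\tilde\varphi(q)$ shows these segments are FDTL, so $\iota^{-1}(r_n)\in I^-(\iota^{-1}(q),M)$ for all $n$ large.

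Since $(M,g)$ is globally hyperbolic with smooth metric, the Avez--Seifert theorem produces for each such $n$ a past-directed maximizing timelike geodesic $\tau_n$ in $M$ from $\iota^{-1}(q)$ to $\iota^{-1}(r_n)$, of Lorentzian length $L_n$. Because the chart metric is $\delta$-close to Minkowski, a direct comparison of $L_n$ with the Minkowski Lorentzian length from $\tilde\varphi(q)=(\varepsilon_0/2,\underline{0})$ to $\tilde\varphi(r_n)=(1/n,\underline{0})$ gives $L_n \to L^*$ for some $L^* = \varepsilon_0/2 + O(\delta) > 0$.

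Now I extract the limit. Reparametrize each $\tau_n$ by unit $g$-speed on $[0,L_n]$ and write $\sigma_n := \tilde\varphi\circ\iota\circ\tau_n$. The causality bound $\dot\sigma_n \in C^-_{\nicefrac{5}{8}}$ pointwise makes the $x^0$-component of $\sigma_n$ strictly decreasing at Euclidean rate at least $\tfrac{5}{8}\|\dot\sigma_n\|_{\mathbb{R}^{d+1}}$. Combining this with maximality and the Minkowski comparison $L_n\to L^*>0$, one obtains uniform bounds on $\|\dot\tau_n(0)\|$ in chart coordinates, since a maximizing geodesic cannot have near-null initial tangent (such a tangent would give Lorentzian length strictly less than the almost-vertical straight segment from $\tilde\varphi(q)$ to $\tilde\varphi(r_n)$). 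After passing to a subsequence, $\dot\tau_n(0) \to v_\infty$ in $T_{\iota^{-1}(q)}M$ with $g(v_\infty,v_\infty)=-1$. Let $\tau_\infty$ be the maximal past-directed geodesic in $M$ with initial data $(\iota^{-1}(q),v_\infty)$, defined on its past-interval of existence $[0,s^*)$. Continuous dependence of geodesics on initial data gives $\tau_n\to\tau_\infty$ on compact subintervals; combined with $\iota(\tau_n(L_n))=r_n\to p$ this forces $s^*=L^*$ and $\lim_{s\to L^{*-}}(\iota\circ\tau_\infty)(s)=p$. Since $p\notin\iota(M)$, $\tau_\infty$ is past-inextendible in $M$. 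Setting $\tau(s):=\tau_\infty(L^*(1-s))$ for $s\in(0,1]$ yields a FDTL past-inextendible geodesic $\tau:(0,1]\to M$ with $\tau(1)=\iota^{-1}(q)$ and $\lim_{s\to 0^+}(\iota\circ\tau)(s) = p \in \partial\iota(M)$, as required.

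The main obstacle is precisely the compactness extraction for $\dot\tau_n(0)$: the past-directed unit timelike hyperboloid in $T_{\iota^{-1}(q)}M$ is non-compact (it extends toward null infinity), so a priori the initial tangents could escape along it, yielding a null limit and no timelike geodesic. Ruling this out requires combining the upper bound on $L_n$ (so tangent norms do not blow up) with the strict lower bound $L^* > 0$ (so the limit tangent is genuinely timelike rather than null). Both rest on the nearly-Minkowski geometry and achronal Lipschitz graph structure furnished by Lemma \ref{lemmagraph}, and this is where the hypotheses of global hyperbolicity and of the existence of a past $C^0$-extension are truly used.
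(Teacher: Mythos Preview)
Your sequential-limit approach is genuinely different from the one the paper sketches (following Galloway--Ling--Sbierski), and it has a real gap. The paper does not approximate $p$ by points $r_n\in\iota(M)$ and take limits; instead it builds a small causal diamond $V_\varepsilon\subset\tilde U$ in the extension, shows that $(V_\varepsilon,\tilde g|_{V_\varepsilon})$ is itself globally hyperbolic as a $C^0$ spacetime, and invokes a $C^0$ Avez--Seifert theorem to obtain a single maximizing causal curve $\alpha$ from $q$ directly to $p$ inside $V_\varepsilon$. The portion of $\alpha$ lying in $\iota(M)$ is then a maximizer for a smooth metric, hence a causal geodesic, and the null case is excluded by a separate argument using global hyperbolicity of $M$. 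The essential feature is that the maximizer is confined to the chart \emph{by construction}.

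In your argument the Avez--Seifert maximizers $\tau_n$ live in all of $M$, and nothing prevents $\iota\circ\tau_n$ from leaving $\tilde U$. This breaks several steps at once: the composition $\sigma_n=\tilde\varphi\circ\iota\circ\tau_n$ need not be defined on all of $[0,L_n]$; the ``Minkowski comparison'' yields only the lower bound for $L_n$ (via the vertical chart segment as competitor), not an upper bound, so the claim $L_n\to L^*=\varepsilon_0/2+O(\delta)$ is unjustified; your exclusion of near-null initial tangents presupposes the entire geodesic remains where $\tilde g$ is $\delta$-close to $\eta$, which is exactly what is in doubt; and even granting $\dot\tau_n(0)\to v_\infty$, continuous dependence on initial data says nothing at the endpoint $s^*$ of the maximal domain, so $\tau_\infty$ could become past-inextendible by escaping to infinity in $M$ or by hitting $\partial\iota(M)$ at some point other than $p$ --- the convergence $\iota(\tau_n(L_n))\to p$ does not by itself force $\lim_{s\to s^{*-}}\iota(\tau_\infty(s))=p$. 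Repairing this requires confining the maximizers a priori to a compact region of $\tilde M$, which is precisely what the $V_\varepsilon$ construction accomplishes; once that is in place, the sequential limit becomes superfluous because one can maximize from $q$ to $p$ directly.
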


As noted in \cite{Schwarzschild2018}, we can combine the proof in \cite{GallowayLingSbierski} with Lemma \ref{lemmagraph} to get

\begin{theorem}\label{gallowaylingsbierski}
    Let $\iota:M\hookrightarrow\Tilde{M}$ be a past $C^0$-extension, i.e. $\partial^-\iota(M)\neq\emptyset$, of a globally hyperbolic Lorentzian manifold $(M,g)$. Choose $p\in\partial^-\iota(M)$ and let $\Tilde{\varphi}:\Tilde{U}\rightarrow R_{\varepsilon_0,\varepsilon_1}$ be a chart around $p$ as in Lemma \ref{lemmagraph}. Then there exists a FDTL and past inextendible geodesic $\tau:(0,1]\rightarrow M$ such that $\Tilde{\varphi}\circ\iota\circ\tau:(0,1]\rightarrow R_{\varepsilon_0,\varepsilon_1}$ lies above the graph, i.e. in $\{(s,\underline{x})\in R_{\varepsilon_0,\varepsilon_1}\,|\,s>f(x)\}$, and has endpoint on $\{(s,\underline{x})\in R_{\varepsilon_0,\varepsilon_1}\,|\,s=f(x)\}$.
\end{theorem}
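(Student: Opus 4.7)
The plan is to follow the construction from \cite{GallowayLingSbierski} used in the preceding theorem, but to carry it out with approximating points that converge to the chosen $p$, so that the past endpoint of the resulting limit geodesic lies inside $\tilde U$ and hence on the graph by Lemma \ref{lemmagraph}. First, I would fix the chart $\tilde\varphi$ and the accompanying timelike witness curve $\tilde\gamma$ provided by Lemma \ref{lemmagraph}, arranged so that $(\tilde\varphi\circ\tilde\gamma)(s)=(s,\underline 0)$ for $s\in[0,\varepsilon_0)$. One checks that $f(\underline 0)=0$, so that $p$ itself lies on the graph of $f$: on the one hand $f(\underline 0)\ge 0$ because $\tilde\varphi(p)=(0,\underline 0)\notin\tilde\varphi(\iota(M))$, and on the other hand $f(\underline 0)\le 0$ because, since $\iota^{-1}\circ\tilde\gamma$ is FDTL and past-inextendible with $\tilde\gamma(s_0)\in\iota(I^-(\Sigma,M))$ for some small $s_0>0$ (as in the proof of Lemma \ref{lemmagraph}), the whole axis segment $(s,\underline 0)$ with $s\in(0,\varepsilon_0)$ in fact lies in $\tilde\varphi(\iota(I^-(\Sigma,M))\cap\tilde U)$.

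Next, for a sequence $s_n\downarrow 0$ I set $q_n:=\iota^{-1}(\tilde\varphi^{-1}(s_n,\underline 0))\in M$; these points converge to $p$ in $\tilde M$. Running the \cite{GallowayLingSbierski} construction with this specific sequence (rather than some arbitrary sequence converging to some arbitrary boundary point), one obtains for each $n$ a past-inextendible FDTL geodesic $\tau_n:(0,1]\to M$ ending at $q_n$, produced by Avez--Seifert-type maximisation of time separation between $q_n$ and a fixed Cauchy surface $\Sigma$ in its past, then extended through $\Sigma$ by geodesic flow. A subsequence converges, via a limit-curve theorem valid for continuous Lorentzian metrics (the technical heart of \cite{GallowayLingSbierski}), to a past-inextendible FDTL geodesic $\tau:(0,1]\to M$ with $\lim_{s\to 0^+}(\iota\circ\tau)(s)=p$. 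I would import this extraction of the limit geodesic as a black box.

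Finally, I would argue that in the chart the image $\tilde\varphi\circ\iota\circ\tau$ stays above the graph of $f$. Each $\tau_n$ ends at $q_n$, whose chart image $(s_n,\underline 0)$ satisfies $s_n>0=f(\underline 0)$ and thus lies strictly above the graph. Since the graph is contained in $\partial^-\iota(M)$ by Lemma \ref{lemmagraph}, while $\iota(M)$ is open in $\tilde M$ and disjoint from $\partial\iota(M)$, no chart image point of $\iota\circ\tau_n$ can lie on the graph; being a continuous curve starting (as the parameter decreases from $1$ towards $0$) in the open above-graph component of $\tilde\varphi(\iota(M)\cap\tilde U)$, the chart image of $\tau_n$ must remain in $\{x_0>f(\underline x)\}$ for as long as it is in $\tilde U$. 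Passing to the subsequential limit, the chart image of $\tau$ therefore stays in this region for every $s>0$ for which $\tau(s)\in\iota^{-1}(\tilde U)$, while its limit endpoint is $\tilde\varphi(p)=(0,\underline 0)$, on the graph. The main technical obstacle is the limit-curve extraction in the second step, for which I simply invoke \cite{GallowayLingSbierski}; my additional input is only the choice of approximating points along the chart's time axis, combined with Lemma \ref{lemmagraph} and the openness of $\iota(M)$ to pin the endpoint precisely on the graph.
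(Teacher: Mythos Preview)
Your approach differs substantively from the paper's and contains a concrete error. The paper does not take a sequence of geodesics in $M$ and pass to a limit. Instead it works directly in $\tilde M$: after fixing the chart of Lemma~\ref{lemmagraph}, it introduces comparison metrics $g^{(2)}\prec\tilde g\prec g^{(\nicefrac{1}{2})}$ and a small causal diamond $V_\varepsilon\subset\tilde U$ (with respect to $g^{(\nicefrac{1}{2})}$) which is globally hyperbolic for $\tilde g$, and then invokes Theorem~2.4 of \cite{GallowayLingSbierski} to obtain a \emph{single} maximal past-directed causal curve $\alpha$ from the fixed point $q=\tilde\varphi^{-1}(\tfrac{\varepsilon}{4},\underline 0)$ to $p$. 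The initial segment $\sigma$ of $\alpha$ lying in $\iota(M)$ is then a maximiser in a region where the metric is smooth, hence a geodesic; the null case is ruled out using global hyperbolicity of $M$. Since $\sigma$ stays inside the cube $(-\varepsilon,\varepsilon)^{d+1}$ with $\varepsilon<\varepsilon_1$, its past endpoint on $\partial\iota(M)$ is forced to lie on the graph of $f$.

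In your construction, a fixed Cauchy surface $\Sigma$ cannot lie to the past of all $q_n$: the curve $\iota^{-1}\circ\tilde\gamma|_{(0,1]}$ is past-inextendible in $M$ and therefore enters $I^-(\Sigma,M)$, so $q_n\in I^-(\Sigma,M)$ for all large $n$. Even with the orientation corrected (take $\Sigma$ to the future of the $q_n$ and extend the Avez--Seifert maximiser backward through $q_n$), the limit step you black-box is not what \cite{GallowayLingSbierski} supplies and is not straightforward: a limit-curve theorem in the $C^0$ extension yields only a causal curve, not a geodesic, and nothing in your setup forces the past endpoint of the limit to be $p$ --- indeed the individual $\tau_n$ need not have past endpoints in $\tilde M$ at all, and if they do you have no control on their location. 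The paper's single-maximisation argument in $V_\varepsilon$ sidesteps both difficulties.
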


\begin{proof}
    We will only sketch the proof idea here. Let $p\in\partial^-\iota(M)$. By Lemma \ref{curveleavesM} there exists a timelike curve $\Tilde{\gamma}:[0,1]\rightarrow\Tilde{M}$ such that $\Tilde{\gamma}(0)=p$ and $\Tilde{\gamma}|_{(0,1]}\subseteq\iota(M)$ is FDTL. By Lemma \ref{lemmagraph} we get a chart $\Tilde{\varphi}:\Tilde{U}\rightarrow R_{\varepsilon_0,\varepsilon_1}$, for $\varepsilon_1<\frac{1}{2\sqrt{d}}\varepsilon_0$, such that
    \begin{enumerate}
        \item $\Tilde{\varphi}\circ\Tilde{\gamma}(s)=(s,0,...,0)$ for all $s\in[0,\varepsilon_0)$
        \item $|\Tilde{g}_{\mu\nu}-\eta_{\mu\nu}|<\delta$ where $\delta>0$ will be fixed below
        \item There exists a Lipschitz continuous function $f:(-\varepsilon_1,\varepsilon_1)^d\rightarrow(-\varepsilon_0,\varepsilon_0)$, with achronal graph in $(-\varepsilon_0,\varepsilon_0)\times(-\varepsilon_1,\varepsilon_1)^d$, such that:\\
        \begin{equation*}
            \{(x_0,\underline{x})\in(-\varepsilon_0,\varepsilon_0)\times(-\varepsilon_1,\varepsilon_1)^d| x_0>f(\underline{x})\}\subseteq \Tilde{\varphi}(\iota(M)\cap\Tilde{U})
        \end{equation*}            
        and \begin{equation*}
            \{(x_0,\underline{x})\in(-\varepsilon_0,\varepsilon_0)\times(-\varepsilon_1,\varepsilon_1)^d| x_0=f(\underline{x})\}\subseteq \Tilde{\varphi}(\partial^-\iota(M)\cap\Tilde{U})
        \end{equation*}
    \end{enumerate}

    Let $g^{(2)}:=-\frac{1}{4}dx_0^2+\sum_{i=1}^d dx_i^2$ and $g^{(\nicefrac{1}{2})}:=-4dx_0^2+\sum_{i=1}^d dx_i^2$ and pick $\delta>0$ so small, that the following causality bounds hold for any $X\in T\Tilde{U}$:
    \begin{equation*}
    \begin{split}
        g^{(2)}(X,X)\leq 0 & \Longrightarrow g(X,X)<0\\
        g(X,X)\leq 0 & \Longrightarrow g^{(\nicefrac{1}{2})}(X,X)<0
    \end{split}
    \end{equation*}

    We will also note this by $g^{(2)}\prec g\prec g^{(\nicefrac{1}{2})}$.\\
    \\
    Now, choose some $\varepsilon\in(0,\varepsilon_1)$ and put
    \[V_\varepsilon:=I^+_{g^{(\nicefrac{1}{2})}}\big(\Tilde{\varphi}^{-1}(-\frac{\varepsilon}{2},0,...,0),\Tilde{U}\big) \cup I^-_{g^{(\nicefrac{1}{2})}}\big(\Tilde{\varphi}^{-1}(\frac{\varepsilon}{2},0,...,0),\Tilde{U}\big)\]

    As shown in the proof of Theorem 3.3 in \cite{GallowayLingSbierski}, $(V_\varepsilon,\Tilde{g}|_{V_\varepsilon})$ is globally hyperbolic.\\
    Theorem 2.4 in \cite{GallowayLingSbierski} shows that there exists a past directed causal curve $\alpha:[0,2]\rightarrow V_\varepsilon$ from $q:=\Tilde{\varphi}^{-1}(\frac{1}{4}\varepsilon,0,...,0)$ to $p=\Tilde{\gamma}(0)$ with maximal length. Since $\alpha$ clearly lies initially in $\iota(M)$ we can write after reparametrization
    \[\alpha(s)=\left\{\begin{array}{ll} \sigma(s), & s\in[0,1)\\
         \lambda(s), & s\in[1,2]\end{array}\right.\]
    where $\sigma([0,1))\subseteq\iota(M)$ and $\lambda(1)\in\partial\iota(M)$.\\
    $\sigma$ is a maximizer in $V_\varepsilon\cap\iota(M)$, since $\alpha$ is a maximizer in $V_\varepsilon$. But $\Tilde{g}$ is a smooth metric in $\iota(M)$ and since radial geodesics are unique maximizing curves in normal neighborhoods, we see that $\sigma$ is either timelike or null geodesic. If $\sigma$ is a timelike geodesic we are finished. Indeed, it is a PDTL geodesic that is completely contained in $\Tilde{\varphi}^{-1}((-\varepsilon,\varepsilon)^{d+1})\subseteq\Tilde{U}$, has endpoint in $\partial\iota(M)$ and since $0<\varepsilon<\varepsilon_1$ it must end on the graph, i.e. in $\partial^-\iota(M)$.\\
    \\
    Assuming $\sigma$ is a null geodesic yields the same contradiction as in \cite{GallowayLingSbierski}. Note that we need global hyperbolicity of $(M,g)$ for this contradiction. This concludes the proof.
\end{proof}
\section{The Kasner spacetime}

\subsection{Introduction}
The Kasner spacetime $(M,g)$ is a $d+1$-dimensional Lorentzian manifold for $d\geq 3$. It describes an anisotropic expanding universe without matter and is an exact solution to the Einstein vacuum equation ($Ric(g)=0$). It is defined as

    \[M=(0,\infty)\times\mathbb{R}^d\]

with the smooth Lorentzian metric
    \[ g=-dt^2+\sum_{i=1}^{d} t^{2 p_i} \,dx^2_i \]

The so-called \textit{Kasner exponents} have to satisfy the following properties:
\begin{equation}\label{exponentcondition}
    \begin{split}
        & \sum_{i=1}^{d} p_i = 1\\
        & \sum_{i=1}^{d} p_i^2 = 1
    \end{split}
\end{equation}

The first condition defines a $d-1$-dimensional hyperplane and the second conditions defines the standard sphere $\mathbb{S}^{d-1}$. Thus, the possible choices of exponents lie on a $d-2$-dimensional sphere.\\
Note that if the choice of exponents is not one of the \textit{trivial solutions}, i.e. where there exists $j_0\in\{1,...,d\}$ such that $p_{j_0}=1$ and the rest are zero, then there always exists at least one negative Kasner exponent. This follows by combining both condition in (\ref{exponentcondition}):\\
    \[ 1 = (\sum_{i=1}^{d} p_i)^2 = \sum_{i=1}^{d} p_i^2 + 2 \sum_{i=1}^{d-1} p_i \sum_{j>i} p_{j} = 1 + 2 \sum_{i=1}^{d-1} p_i \sum_{j>i} p_{j}\]

and thus
\begin{equation}\label{exponent}
    \sum_{i=1}^{d-1} p_i \sum_{j>i} p_{j} = 0
\end{equation}

Note that (\ref{exponent}) implies that either the solution is one of the trivial ones or there exist at least three non-zero Kasner exponents and at least one (but not all) of them needs to be negative.\\
Moreover, if without loss of generality $p_1<0$, then for $d=3$ we get $-\frac{1}{3}\leq p_1<0$ (by \cite{Misner1973}).\\
\\
Assuming we are in a trivial solution with, without loss of generality, $p_1=1$. Then the metric takes the form
\[g=-dt^2+t^2dx_1^2+\sum_{i=2}^d dx_i^2\]

With the coordinate transformation $\Tilde{t}=t \cosh(x_1)$ and $\Tilde{x}_1=t \sinh(x_1)$ one recovers the Minkowski metric
\[-d\Tilde{t}^2+d\Tilde{x}_1^2 +\sum_{i=2}^d dx_i^2\]

Clearly the range of $\Tilde{t}$ is $(0,\infty)$ and of $\Tilde{x}_1$ is $\mathbb{R}$. We see that the Kasner spacetime with trivial exponents can be isometrically embedded into the open subset $(0,\infty)\times\mathbb{R}^d$ of Minkowski space $\mathbb{R}^{d+1}_1$ with the obvious map. This shows that the trivial case is $C^{\infty}$-extendible.\\
\\
Going forward we will only consider the case where there exists at least one negative Kasner exponent, unless explicitly stated otherwise. This case is sometimes called \textit{Kasner spacetime with negative exponent}.\\

\subsection{Properties}
Since $\partial_t$ is a nowhere vanishing timelike vector field, we can stipulate that it gives the future direction. Thus, the Kasner spacetime is time-oriented, oriented and connected smooth Lorentzian manifold, so it indeed satisfies the requirements of being a spacetime.\\
\\
In the following we will study the curvature of the Kasner spacetime. The only non-vanishing Christoffel symbols are
\begin{equation*}
    \begin{split}
        & \Gamma^0_{ii} =p_it^{2p_i-1}\\
        & \Gamma^i_{0i}=\Gamma^i_{i0}= \frac{p_i}{t}
    \end{split}
\end{equation*}

The Riemann curvature tensor $R\in\mathcal{T}^1_3(M)$ can be calculated in local coordinates as $R_{\partial_\kappa \partial_\mu} \partial_\nu = R_{\kappa \mu \nu}^\sigma \partial_\sigma$ where
\[R_{\kappa\mu\nu}^\sigma:=\partial_\mu \Gamma^\sigma_{\nu\kappa}-\partial_\nu \Gamma^\sigma_{\mu\kappa} + \Gamma^\sigma_{\mu\rho}\Gamma^\rho_{\nu\kappa} - \Gamma^\sigma_{\nu\rho}\Gamma^\rho_{\mu\kappa}\]

The non-vanishing components of the Riemann curvature tensor are with $i\neq j$:
\begin{equation*}
    \begin{split}
        R_{i0i}^0 & =-R_{ii0}^0=p_i(p_i-1)t^{2p_i-2}\\
        R_{00i}^i & =-R_{0i0}^i=p_i(p_i-1)t^{-2}\\
        R_{jij}^i & =-R_{jji}^i=p_ip_jt^{2p_j-2}
    \end{split}
\end{equation*}

We can lower the first index with the metric: $R_{\sigma\kappa\mu\nu}=g^{\sigma\rho}R_{\kappa\mu\nu}^\rho$.
\begin{equation*}
    \begin{split}
        R_{0i0i} & = -R_{0ii0} = R_{i0i0} = -R_{i00i} = p_i(1-p_i)t^{2p_i-2}\\
        R_{ijij} & = -R_{ijji} = p_ip_jt^{2p_i+2p_j-2}
    \end{split}
\end{equation*}

It clearly follows, that the Riemann curvature tensor vanishes if, and only if, the Kasner exponents are trivial solutions.\\
\\
If we are not in the trivial case, i.e. there exists at least one negative Kasner exponent, then the curvature blows up as $t\rightarrow 0$. To see this, we need to study coordinate invariant curvature scalars. In mathematical General Relativity a usual choice for this is the Kretschmann scalar $K=R_{\sigma\kappa\mu\nu}R^{\sigma\kappa\mu\nu}$, where $R^{\sigma\kappa\mu\nu}=g^{\sigma\alpha}g^{\kappa\beta}g^{\mu\gamma}g^{\nu\delta}R_{\alpha\beta\gamma\delta}$.\\
The non-vanishing terms are with $i\neq j$:
\begin{equation*}
    \begin{split}
        R^{0i0i} & = -R^{0ii0} = R^{i0i0} = -R^{i00i} = p_i(1-p_i)t^{-2p_i-2}\\
        R^{ijij} & = -R^{ijji} = p_ip_jt^{-2p_i-2p_j-2}
    \end{split}
\end{equation*}

And thus we see that the Kretschmann scalar of the Kasner spacetime is given by
\[K=\frac{4}{t^4}\Big(\sum_{i=1}^d p_i^2(1-p_i)^2 + \sum_{i=1}^d \sum_{j>i} p_i^2p_j^2\Big)\]

It is easy to conclude that the Kretschmann scalar blows up for $t\rightarrow 0$ if, and only if, we are not in one of the trivial solutions of (\ref{exponentcondition}), i.e. the Kasner spacetime with negative exponent has a curvature singularity for $t\rightarrow 0$. This immediately shows that it is $C^2$-inextendible.\\
\\
The Ricci curvature of a semi-Riemannian manifold is defined to be the contraction $\mathcal{C}^1_3 R\in\mathcal{T}^0_2(M)$ of the Riemann curvature tensor $R$. In local coordinates it can be computed via
\[R_{ij}:=\partial_\mu \Gamma^\mu_{ij}-\partial_j \Gamma^\mu_{i\mu} + \Gamma^\mu_{\mu\nu}\Gamma^\nu_{ij} - \Gamma^\mu_{i\nu}\Gamma^\nu_{\mu j}\]

After a short calculation we get that the only non-zero terms are
\begin{equation*}
    \begin{split}
        & R_{00}=(\sum_{j=1}^d p_j^2 - \sum_{j=1}^d p_j)t^{-2}\\
        & R_{ii}=(1 - \sum_{j=1}^d p_j)p_it^{2p_i-2}
    \end{split}
\end{equation*}

Thus, the two conditions (\ref{exponentcondition}) for the Kasner exponents are satisfied if, and only if, the Ricci curvature vanishes. Then, we see that the Kasner spacetime is indeed a vacuum solutions of the Einstein Equations, i.e. $Ric(g)=0$.\\
\\
\\
The volume element can be computed as 
\[\sqrt{-|g|}=t^{p_1+...+p_d}=t\]

where $|g|$ denotes the determinant of the metric. Since the volume of the spacial slices is always proportional to the volume element, we get that the spatial volume is $O(t)$. So for $t\rightarrow 0$ we can interpret the Kasner spacetime as a cosmological model with a Big Bang (or Big Crunch if we reverse the time orientation).\\
\\
It is noteworthy that an isotropic expansion (or contraction depending on time orientation) is not possible, since the Kasner exponents can't all be equal. Assuming, by contradiction, they are all equal, the first condition would imply that for all $j\in\{1,...,d\}$ we get $p_j=\frac{1}{d}$. However, the second condition then can't be satisfied since $d\geq 3$:
\[\sum_{i=1}^{d} p_i^2 = \frac{1}{d}\neq 1\]
We say the Kasner spacetime models an anisotropic expanding (or contracting) universe.\\
\\
However, the Kasner spacetime still admits a lot of isometries. This can be seen by the fact that for all $i\in\{1,...,d\}$ the coordinate vector fields $\partial_i$ are Killing vector fields. Indeed
\[(\mathcal{L}_{\partial_i} g)_{\mu\nu} = \partial_i g_{\mu\nu} = 0\]

This shows that any one-parameter group of diffeomorphisms $F_{i}:\mathbb{R}\times M\rightarrow M$ with infinitesimal generator $\partial_i$ is a one-parameter group of isometires. These isometries are of the form
\[F_{i,\lambda}(t,x_1,...,x_i,...,x_d)=(t,x_1,...,x_i+\lambda,...,x_d)\]
for some $\lambda\in\mathbb{R}$, where $F_{i, \lambda}(\cdot):=F_i(\lambda,\cdot)$.\\
\\
Since $\partial_i$ are Killing vector fields we know that $g(\partial_i,\dot{\gamma})$ is constant along any geodesics $\gamma$. This implies that a timelike geodesic $\gamma:I\rightarrow M$ in Kasner must satisfy
\begin{equation}\label{geodesicKasner}
    -\dot{\gamma}_t(s)^2 + \sum^d_{i=1} \gamma_t(s)^{-2p_i} k_i^2 = -1
\end{equation}

where $k_i=g(\partial_i,\dot{\gamma})$ are constants.\\
\\
\\
In the following we prove important properties of the Kasner spacetime that we will need for the proof of the $C^0$-inextendibility later.

\begin{lemma}\label{globallyhyperbolic}
    The Kasner spacetime is globally hyperbolic.
\end{lemma}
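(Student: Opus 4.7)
The plan is to exhibit an explicit Cauchy hypersurface, namely $\Sigma_{t_0}:=\{t_0\}\times\mathbb{R}^d$ for any fixed $t_0>0$. Smoothness and embeddedness are clear from the product structure, so the task reduces to showing that every inextendible timelike curve meets $\Sigma_{t_0}$ exactly once. By continuity of the time orientation and connectedness of the parameter interval, every inextendible timelike curve is globally either future- or past-directed, so it suffices to handle FDTL curves; the PDTL case follows by reversing time.

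First I would establish strict monotonicity of the $t$-coordinate along any FDTL curve $\gamma$. A tangent vector $X=X^0\partial_t+X^i\partial_i$ satisfies $g(X,X)=-(X^0)^2+\sum_i t^{2p_i}(X^i)^2$, so timelikeness forces $(X^0)^2>0$, and future-directedness (that $X$ lies in the cone containing $\partial_t$) gives $X^0>0$. Consequently $\dot{\gamma}_t(s)>0$ at every differentiable point, whence $\gamma_t$ is strictly increasing. This immediately gives the "at most once" half of the Cauchy condition.

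For existence of intersection, I would argue by contradiction. Suppose an inextendible FDTL curve $\gamma\colon(a,b)\to M$ avoids $\Sigma_{t_0}$; without loss of generality (by symmetry between the two ends) $\gamma_t(s)<t_0$ for all $s\in(a,b)$. Then $t^*:=\lim_{s\to b^-}\gamma_t(s)\in(0,t_0]$ is finite and positive. Since $\gamma_t$ is continuous and strictly increasing, it is a homeomorphism onto its image, so I can reparametrize and write $\gamma(t)=(t,\underline{\gamma}(t))$ for $t$ in a left neighborhood of $t^*$. The timelike condition then reads
\[\sum_{i=1}^d t^{2p_i}\bigl(\underline{\gamma}_i'(t)\bigr)^2<1,\]
hence $|\underline{\gamma}_i'(t)|<t^{-p_i}$ for each $i$. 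As $t$ ranges in a compact subinterval of $(0,\infty)$ around $t^*$, each factor $t^{-p_i}$ is bounded, so every $\underline{\gamma}_i$ is Lipschitz and admits a limit as $t\to t^{*-}$. Setting $\gamma(b):=(t^*,\lim_{t\to t^{*-}}\underline{\gamma}(t))$ extends $\gamma$ continuously to $[a,b]\cap I\cup\{b\}$, contradicting future-inextendibility.

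The crux is the extension step in the third paragraph; everything else is essentially bookkeeping. The argument is technically routine, but it leans decisively on the fact that $t^*>0$, so that the possibly degenerate metric coefficients $t^{2p_i}$ stay bounded away from both $0$ and $\infty$ on the relevant compact interval. This is precisely what fails at $t=0$ (for negative Kasner exponents $t^{2p_i}\to\infty$ as $t\to 0$), which is consistent with $t=0$ being a genuine singularity and not an artificial coordinate boundary to which Cauchy hypersurfaces could be pushed.
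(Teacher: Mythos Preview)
Your proposal is correct and follows essentially the same approach as the paper: both exhibit $\Sigma_{t_0}=\{t=t_0\}$ as a Cauchy hypersurface by using the timelike condition to show $\dot\gamma_t\neq 0$, reparametrizing by $t$, and then bounding the spatial derivatives via $|\underline{\gamma}_i'(t)|<t^{-p_i}$ (equivalently $\|\underline{\dot\gamma}\|<1/\sqrt{C_{a,b}}$) on compact subintervals of $(0,\infty)$ to force convergence of the spatial coordinates, contradicting inextendibility. The only cosmetic difference is packaging: the paper phrases the future end as ``show $b=\infty$ after reparametrization by $t$'', whereas you phrase it as ``assume the curve misses $\Sigma_{t_0}$ and derive a finite future limit $t^*\in(0,t_0]$'', but the analytic content is identical.
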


\begin{proof}
    Note that $\tau:M\rightarrow (0,\infty)$, given by $\tau((t,\underline{x}))=t$, is a smooth temporal function, meaning $grad\,\tau=-\partial_t$ is a PDTL vector field on $M$.\\
    We want to show that for a fixed $t_0\in(0,\infty)$ the set $\Sigma_{t_0}:=\{t=t_0\}$ is a Cauchy surface.\\
    Let $I$ be some Interval and $\gamma:I\rightarrow M$ be FDTL. Note that for all $s\in I$:
    \[0>g(\dot{\gamma}(s),\dot{\gamma}(s))= -\dot{\gamma_t}(s)^2 + \sum_{i=1}^{d} \gamma_t(s)^{2 p_i} \underline{\dot{\gamma}}_i(s)^2 \geq -\dot{\gamma_t}(s)^2\]
    since $\gamma_t(s)>0$ for all $s\in I$. It follows that $|\dot{\gamma_t}(s)|>0$ for all $s\in I$ and by the Inverse Function Theorem that we can always parameterise any timelike curve with respect to the $t$-coordinate, i.e. write the curve as $\gamma(s)=(s,\underline{\gamma}(s))$.\\
    \\
    Since $\tau$ is a temporal function we know that each causal curve that intersects $\Sigma_{t_0}$ only does so exactly once. Thus, it is left to show that any inextendible timelike curve intersects $\Sigma_{t_0}$ at least once.\\
    Fix $0<a<t_0$ and let $\gamma:(a,b)\rightarrow M$ be a FDTL and future inextendible curve with $\gamma(s)=(s,\underline{\gamma}(s))$.
    If $b=\infty$ the curves intersects $\Sigma_{t_0}$ by continuity.\\
    So assume $b<\infty$. Then for all $s\in(a,b)$:
    \[0>g(\dot{\gamma}(s),\dot{\gamma}(s))=-1 + \sum_{i=1}^{d} s^{2 p_i} \underline{\dot{\gamma}}_i(s)^2 \geq -1 +  C_{a,b} \sum_{i=1}^{d} \underline{\dot{\gamma}}_i(s)^2\]
    where \[C_{a,b}= \min_{i=1,..,d}\{\min_{s\in[a,b]} s^{2p_i}\}>0\]
    exists since $0<a<b<\infty$. Thus, we get the following uniform bound for $s\in(a,b)$:
    \[\|\underline{\dot{\gamma}}(s)\|_{\mathbb{R}^d} < \frac{1}{\sqrt{C_{a,b}}}<\infty\]

    Since $\gamma$ was assumed to be in future inextendible, we assumed that for some converging sequence $b_n\rightarrow b$ as $n\rightarrow\infty$ the limit $\lim_{n\rightarrow\infty} \gamma(b_n)$ does not exist. By the way we parameterised $\gamma$ and $b_n$ being convergent, this means that we assumed that the limit of $(\underline{\gamma}(b_n))_{n\in\mathbb{N}}\subseteq\mathbb{R}^d$ does not exist. However, for $n<m\in\mathbb{N}$ we get:
    \[\|\underline{\gamma}(b_m)-\underline{\gamma}(b_n)\|_{\mathbb{R}^d} \leq \int_{b_n}^{b_m} \|\underline{\dot{\gamma}}(s)\|_{\mathbb{R}^d}\,ds < |b_m-b_n|\,\frac{1}{\sqrt{C_{a,b}}}\]
    So $(\underline{\gamma}(b_n))_{n\in\mathbb{N}}$ is a Cauchy sequence in $\mathbb{R}^d$ and thus a limit exists, which is a contradiction to the assumption. Thus, it holds that any such FDTL and future inextendible curve is of the form $\gamma:(a,\infty)\rightarrow M$ and thus intersects $\Sigma_{t_0}$ at $\gamma(t_0)$.\\
    The proof for PDTL and past inextendible curves starting in the future of $\Sigma_{t_0}$ follows analogously. Thus, any inextendible timelike curve is of the form $\gamma:(0,\infty)\rightarrow M$ and meets $\Sigma_{t_0}$ exactly once. We conclude that $\Sigma_{t_0}$ is a Cauchy hypersurface and since $\tau$ is a smooth temporal function it is even smooth and spacelike. This shows that $M$ is globally hyperbolic.
\end{proof}

\begin{lemma}\label{geodesicto0/infinity}
    Let $\tau:[a,b)\rightarrow M$ be a future or past directed timelike geodesic that is future or past inextendible respectively. Then it holds that $(t\circ\tau)(s)\rightarrow \infty$ or $(t\circ\tau)(s)\rightarrow 0$ as $s\rightarrow b$ respectively.
\end{lemma}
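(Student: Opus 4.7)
The plan is to argue by contradiction, showing that any failure of the claimed limit forces the geodesic to converge to an interior point of $M$, contradicting inextendibility. The key inputs are the geodesic equation (\ref{geodesicKasner}) together with the conservation laws $k_i = g(\partial_i,\dot\tau) = (t\circ\tau)^{2p_i}\dot\tau_i$ that come from $\partial_i$ being Killing.

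First, I rewrite (\ref{geodesicKasner}) as
\[\dot\tau_t(s)^2 = 1 + \sum_{i=1}^d (t\circ\tau)(s)^{-2p_i} k_i^2 \geq 1,\]
so $|\dot\tau_t(s)|\geq 1$ for every $s\in[a,b)$. Since $\tau$ is future (resp.\ past) directed and $\partial_t$ defines the time orientation, this gives $\dot\tau_t\geq 1$ (resp.\ $\dot\tau_t\leq-1$). In particular $t\circ\tau$ is strictly monotone and the limit $T:=\lim_{s\to b^-}(t\circ\tau)(s)$ exists in $[0,\infty]$.

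Consider the future-directed case. If $b=\infty$, then integrating $\dot\tau_t\geq 1$ gives $(t\circ\tau)(s)\geq (t\circ\tau)(a)+(s-a)\to\infty$, and we are done. If $b<\infty$, suppose for contradiction that $T<\infty$. Then $(t\circ\tau)([a,b))\subseteq[(t\circ\tau)(a),T]$, which is a compact subset of $(0,\infty)$, so each continuous function $t\mapsto t^{-2p_i}$ is bounded on this range. From the conservation laws $\dot\tau_i=k_i(t\circ\tau)^{-2p_i}$ and from the geodesic equation above, $|\dot\tau_t|$ and $|\dot\tau_i|$ are all uniformly bounded on $[a,b)$, say by some constant $C>0$. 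Then $\|\tau(s_2)-\tau(s_1)\|_{\mathbb{R}^{d+1}}\leq\sqrt{d+1}\,C\,|s_2-s_1|$, so $\tau$ is uniformly continuous on $[a,b)$ and extends continuously to $s=b$ with limit point $(T,\underline\tau_\infty)\in(0,\infty)\times\mathbb{R}^d=M$. This contradicts future inextendibility, so $T=\infty$.

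The past-directed case is analogous, but with two small modifications: the inequality $\dot\tau_t\leq-1$ together with positivity of $t$ forces $b\leq a+(t\circ\tau)(a)<\infty$ automatically, so there is no case split on $b$; and the assumption for contradiction is $T>0$, which again confines $t\circ\tau$ to a compact subset of $(0,\infty)$ and runs the same Cauchy-completion argument to produce a continuous extension of $\tau$ to $b$, contradicting past inextendibility. The only real content beyond bookkeeping is the observation that the Killing conservation laws give \emph{uniform} control on the spatial components of $\dot\tau$ as soon as $t\circ\tau$ is confined to a compact subinterval of $(0,\infty)$; this is where the negative Kasner exponents are felt, but since we only need boundedness (not smallness) on such a compact interval, the sign of the $p_i$ is irrelevant and the argument presents no real obstacle.
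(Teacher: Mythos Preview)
Your proof is correct. The paper's own proof is the single line ``This follows from the proof of Lemma~\ref{globallyhyperbolic},'' so the relevant comparison is with that argument. There the paper reparameterises an arbitrary timelike curve by its $t$-coordinate and uses only the timelike condition $-1+\sum s^{2p_i}\underline{\dot\gamma}_i^2<0$ to bound the spatial velocities on any compact $t$-interval, forcing a continuous extension. You instead keep the affine parameter, invoke the unit-speed geodesic relation~(\ref{geodesicKasner}) to get $|\dot\tau_t|\geq 1$ and monotonicity of $t\circ\tau$, and use the Killing conservation laws $\dot\tau_i=k_i(t\circ\tau)^{-2p_i}$ to bound the spatial velocities. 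Both routes reduce to the same core observation---if $t\circ\tau$ stays in a compact subinterval of $(0,\infty)$ the curve extends continuously---but the paper's argument is more elementary in that it never needs the geodesic hypothesis or the constants of motion, while yours gives the extra information $|\dot\tau_t|\geq 1$ (hence $b<\infty$ automatically in the past case) at the cost of that additional structure. Your closing remark about ``where the negative Kasner exponents are felt'' is slightly misleading: neither argument uses the sign of the $p_i$ at all, and indeed the lemma holds for every choice of exponents.
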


\begin{proof}
    This follows from the proof of Lemma \ref{globallyhyperbolic}.
\end{proof}

\begin{lemma}\label{futureoneconnected}
    The Kasner spacetime is future one-connected.
\end{lemma}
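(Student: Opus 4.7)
The plan is to exploit the warped-product structure of the Kasner spacetime: in the global chart $M=(0,\infty)\times\mathbb{R}^d$ the metric reads $g=-dt^2+h_t$, where $h_t:=\sum_{i=1}^d t^{2p_i}\,dx_i^2$ is a time-dependent \emph{Riemannian} metric on $\mathbb{R}^d$. In this parameterization the timelike condition for a curve $s\mapsto(\gamma_t(s),\underline{\gamma}(s))$ with $\dot\gamma_t(s)>0$ becomes $\sum_i\gamma_t(s)^{2p_i}\underline{\dot\gamma}_i(s)^2<\dot\gamma_t(s)^2$, and the spatial side is convex in $\underline{\dot\gamma}$. This convexity is exactly what will allow a naive convex-combination homotopy in the spatial coordinates to remain timelike.

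Given $p\ll q$ and two FDTL curves $\gamma^0,\gamma^1$ from $p$ to $q$, the first step is to bring them to a common parameterization. By the argument in the proof of Lemma \ref{globallyhyperbolic}, on each curve the component $\gamma^j_t$ is a continuous, piecewise smooth, strictly increasing bijection of its domain onto $[t_p,t_q]:=[t(p),t(q)]$. Composing with its inverse and rescaling linearly to $[0,1]$ yields FDTL curves $\alpha,\beta\colon[0,1]\to M$ of the form
\begin{equation*}
\alpha(s)=\bigl(T(s),\underline{\alpha}(s)\bigr),\qquad \beta(s)=\bigl(T(s),\underline{\beta}(s)\bigr),\qquad T(s):=t_p+s(t_q-t_p),
\end{equation*}
both still running from $p$ to $q$. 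A standard fact, whose proof is a direct application of the chain rule to the affine homotopy $\Gamma(u,s)=\gamma((1-u)s+u\phi(s))$ (whose $s$-derivative is the positive scalar multiple $((1-u)+u\phi'(s))\,\dot\gamma$), shows that any FDTL curve is timelike-homotopic rel endpoints to any reparameterization of itself by an orientation-preserving diffeomorphism. Combining this with the initial linear rescaling of the parameter interval reduces the problem to constructing a timelike homotopy between $\alpha$ and $\beta$.

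For this I define the spatial linear interpolation
\begin{equation*}
\Gamma(u,s):=\bigl(T(s),\,(1-u)\underline{\alpha}(s)+u\underline{\beta}(s)\bigr),\qquad(u,s)\in[0,1]^2.
\end{equation*}
It is continuous, satisfies $\Gamma(0,\cdot)=\alpha$, $\Gamma(1,\cdot)=\beta$, $\Gamma(u,0)=p$ and $\Gamma(u,1)=q$. At every point of smoothness,
\begin{equation*}
g(\partial_s\Gamma,\partial_s\Gamma)=-(t_q-t_p)^2+\sum_{i=1}^d T(s)^{2p_i}\bigl[(1-u)\,\underline{\dot\alpha}_i(s)+u\,\underline{\dot\beta}_i(s)\bigr]^2,
\end{equation*}
and the elementary convexity inequality $((1-u)a+ub)^2\le(1-u)a^2+ub^2$, applied coordinatewise together with the timelikeness of $\alpha$ and $\beta$, bounds the right-hand side strictly above by $-(t_q-t_p)^2+(1-u)(t_q-t_p)^2+u(t_q-t_p)^2=0$. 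The $t$-component of $\partial_s\Gamma$ equals $t_q-t_p>0$, so the tangent is in the future cone. Concatenating the homotopies $\gamma^0\simeq\alpha\simeq\beta\simeq\gamma^1$ then yields the required timelike homotopy.

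I do not expect a serious obstacle; this lemma really just formalizes the convexity of the timelike cone over each constant-$t$ slice. The only delicate bookkeeping concerns piecewise smoothness: the inverse of $\gamma^j_t$ is only piecewise smooth and introduces finitely many additional breakpoints into $\alpha$ and $\beta$, but the definition of a timelike curve already accommodates this, and continuity of $\Gamma\colon[0,1]^2\to M$ is automatic from continuity of $\alpha$ and $\beta$.
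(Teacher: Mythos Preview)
Your proof is correct and takes essentially the same approach as the paper: reparameterize both curves by the $t$-coordinate, then linearly interpolate the spatial components and use convexity of $x\mapsto x^2$ to verify that the interpolated curves remain timelike. You are more explicit than the paper about why the reparameterization step is itself a timelike homotopy (the paper simply writes ``we can again assume that $\gamma^i(s)=(s,\underline{\gamma}^i(s))$''), but the core argument is identical.
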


\begin{proof}
    Let $\gamma^0,\gamma^1:[a,b]\rightarrow M$ be two FDTL curves with same start and end point, i.e. $\gamma^0(a)=\gamma^1(a)$ and $\gamma^0(b)=\gamma^1(b)$. Note that we can again assume that $\gamma^i(s)=(s,\underline{\gamma}^i(s))$ for $i\in\{0,1\}$. One can then define a homotopy $\Gamma:[0,1]\times[a,b]\rightarrow M$ by
    \[\Gamma(u,s):=(s,(1-u)\gamma^0(s)+u\gamma^1(s))\]
    Indeed one has $\Gamma(i,s)=\gamma^i(s)$, $\Gamma(u,a)=\gamma^i(a)$ and $\Gamma(u,b)=\gamma^i(b)$ for $i\in\{0,1\}$ and all $u\in[0,1]$.
    So it is left to show that $\Gamma(u,\cdot)$ is FDTL for all $u\in[0,1]$.\\
    Using the fact that $x\mapsto x^2$ is convex, $\gamma^0$ and $\gamma^1$ are timelike and $s^{2 p_i}>0$ for all $0<a\leq s\leq b<\infty$ and $i\in\{1,...,d\}$ we get for all $u\in[0,1]$:
    \begin{equation*}
        \begin{split}
            g(\partial_s\Gamma(u,s),\partial_s\Gamma(u,s)) & = -1 + \sum_{i=1}^d s^{2 p_i} ((1-u)\underline{\dot{\gamma}}^0_i(s)+u\underline{\dot{\gamma}}^1_i(s))^2\\
            & \leq -1 + (1-u)\sum_{i=1}^d s^{2 p_i} \underline{\dot{\gamma}}^0_i(s)^2 + u\sum_{i=1}^d s^{2 p_i} \underline{\dot{\gamma}}^1_i(s)^2\\
            & = (1-u)g(\dot{\gamma}^0(s),\dot{\gamma}^0(s)) + ug(\dot{\gamma}^1(s),\dot{\gamma}^1(s))<0
        \end{split}
    \end{equation*}
    
    This proves that the Kasner spacetime is future one-connected.
\end{proof}

\begin{lemma}\label{kasnerfuturecomplete}
    The Kasner spacetime is future complete, i.e. any FDTL affinely parameterised and future inextendible geodesic is of the form $\gamma:(a,\infty)\rightarrow M$.
\end{lemma}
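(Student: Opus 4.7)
The plan is to exploit the conservation laws coming from the spatial Killing fields $\partial_i$ and reduce the statement to a one-dimensional integral estimate. Affine reparametrisations are linear, so rescaling the parameter cannot turn an unbounded interval into a bounded one; I therefore normalise so that $g(\dot\gamma,\dot\gamma)=-1$. Equation (\ref{geodesicKasner}) then gives
\[
\dot\gamma_t(s)^2 \;=\; 1 + \sum_{i=1}^d \gamma_t(s)^{-2p_i} k_i^2 \;\geq\; 1,
\]
so $t$ is strictly increasing along $\gamma$. By Lemma \ref{geodesicto0/infinity}, $\gamma_t(s)\to\infty$ as $s\to b$, where $[a,b)$ is the maximal future existence interval. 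Reparametrising by $t$, for any $s_0\in[a,b)$,
\[
b - s_0 \;=\; \int_{\gamma_t(s_0)}^{\infty}\frac{dt}{\sqrt{1+\sum_{i} t^{-2p_i}k_i^2}},
\]
so to conclude $b=\infty$ it is enough to show this improper integral diverges.

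Set $p_{\min}:=\min_i p_i$; this is strictly negative under the standing assumption, and after relabelling I take $p_1=p_{\min}$. For $t\geq 1$, both $1\leq t^{-2p_{\min}}$ and $t^{-2p_i}\leq t^{-2p_{\min}}$ for every $i$, so
\[
\sqrt{1+\sum_{i} t^{-2p_i}k_i^2} \;\leq\; C\, t^{-p_{\min}},\qquad C:=\sqrt{1+\textstyle\sum_i k_i^2}.
\]
Choosing $s_0$ so that $\gamma_t(s_0)\geq 1$ yields
\[
b - s_0 \;\geq\; \frac{1}{C}\int_{\gamma_t(s_0)}^{\infty} t^{p_{\min}}\,dt,
\]
and the right-hand side is $+\infty$ as soon as $p_{\min}\geq -1$. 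Thus the whole argument reduces to the universal lower bound $p_{\min}>-1$ on the most negative Kasner exponent.

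I expect this bound to be the main obstacle: the excerpt only recalls the sharp case $p_{\min}\geq -1/3$ for $d=3$, and the general statement is not stated. Writing $p_1=-a$ with $a>0$ and isolating the negative exponent in the Kasner constraints (\ref{exponentcondition}) gives
\[
\sum_{i=2}^d p_i \;=\; 1+a,\qquad \sum_{i=2}^d p_i^2 \;=\; 1-a^2.
\]
Applying Cauchy--Schwarz to the $(d-1)$-tuple $(p_2,\dots,p_d)$ against $(1,\dots,1)$ produces $(1+a)^2\leq (d-1)(1-a^2)=(d-1)(1+a)(1-a)$, and cancelling the positive factor $1+a$ gives $a\leq (d-2)/d<1$. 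Hence $p_{\min}\geq -(d-2)/d > -1$, the integral above diverges, and therefore $b-s_0=\infty$, forcing $b=\infty$.
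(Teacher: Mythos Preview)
Your proof is correct and follows the same core strategy as the paper: use the conserved quantities $k_i=g(\partial_i,\dot\gamma)$ to reduce to the scalar equation $\dot\gamma_t^2=1+\sum_i \gamma_t^{-2p_i}k_i^2$, invoke Lemma~\ref{geodesicto0/infinity} to get $\gamma_t\to\infty$, and then use the bound $p_i>-1$ to show $\gamma_t$ cannot reach infinity in finite affine parameter. The differences are purely in execution. First, the paper bounds $\gamma_t^{-2p_i}\leq\gamma_t^2$ for $\gamma_t\geq1$ to obtain the differential inequality $\dot\gamma_t^2\leq Cd\,\gamma_t^2+1$ and compares with $\sinh$, whereas you separate variables and estimate the integral $\int^\infty t^{p_{\min}}\,dt$ directly; these are equivalent. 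Second, the paper simply asserts ``since we know that $-1<p_i$'' without justification, while you supply a proof via Cauchy--Schwarz and in fact obtain the sharp constant $p_{\min}\geq -(d-2)/d$; note that the weaker bound $p_i>-1$ already follows immediately from $\sum_j p_j^2=1$ (giving $|p_i|\leq1$) together with $\sum_j p_j=1$ (ruling out $p_i=-1$), so your argument is more than is strictly needed, but it does make the proof fully self-contained.
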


\begin{proof}
    Let $\gamma:I\rightarrow M$ be a FDTL and future inextendible geodesic. Assume without loss of generality that it is parameterised by arc length (unit speed).\\
    By (\ref{geodesicKasner}) we know that it must satisfy
    \[\dot{\gamma}_t(s)^2 = \sum^d_{i=1} \gamma_t(s)^{-2p_i} k_i^2 + 1\]

    Furthermore, by Lemma \ref{geodesicto0/infinity} it holds that $\gamma_t(s)\rightarrow\infty$. The claim is proven once we show that this implies $s\rightarrow\infty$.\\
    So we want to show, that $\gamma_t(s)$ doesn't blow up in finite time.
    Assume without loss of generality that $\gamma_t(s)\geq 1$ and let $C:=\max \{\,k^2_i\,|\, i=1,...,d\,\}$.
    Since we know that $-1<p_i$ for all $i\in\{1,...,d\}$, $\gamma_t(s)^{-2p_i}\leq \gamma_t(s)^2$ holds. All together we get\\
    \[\dot{\gamma}_t(s)^2 \leq  Cd \gamma_t(s)^2 + 1\]

    This implies, that for some constant $\Tilde{C}>0$ and $C_d:=\sqrt{Cd}$ we get
    \[\gamma_t(s)\leq \frac{\sinh(C_d\,\Tilde{C}+C_d\,s)}{C_d}\]

    Note that the right hand side goes to $\infty$ if, and only if, $s\rightarrow\infty$. Since any geodesic is a linear reparametrization of a unit speed geodesic geodesic, this proves the claim.
\end{proof}

\begin{lemma}\label{kasnerpastincomplete}
    The Kasner spacetime is past incomplete.
\end{lemma}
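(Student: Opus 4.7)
The plan is to exhibit a concrete past directed timelike affinely parametrized geodesic which is past inextendible but whose affine parameter interval is bounded. The natural candidate, mirroring what one does in the Schwarzschild interior, is a curve of constant spatial coordinates parametrized by $t$. Looking at the list of Christoffel symbols computed earlier in this section, one sees that $\Gamma^\mu_{00}=0$ for every $\mu$ (the only nonvanishing symbols are $\Gamma^0_{ii}$ and $\Gamma^i_{0i}=\Gamma^i_{i0}$), so the integral curves of $\partial_t$ are autoparallel and $t$ itself is an affine parameter along them.

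Concretely, I would fix $\underline{x}_0\in\mathbb{R}^d$ and $T>0$ and define
\[
\sigma:[0,T)\rightarrow M,\qquad \sigma(u):=(T-u,\underline{x}_0).
\]
Then $\dot{\sigma}(u)=-\partial_t|_{\sigma(u)}$, which is timelike since $g(\partial_t,\partial_t)=-1$ and past directed by the convention that $\partial_t$ gives the future direction. Because $\sigma$ is an affine reparametrization of the integral curve of $\partial_t$ through $(T,\underline{x}_0)$, the computation above shows $\nabla_{\dot{\sigma}}\dot{\sigma}=0$, so $\sigma$ is an affinely parametrized PDTL geodesic.

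Past inextendibility is then immediate: as $u\rightarrow T^-$, one has $(t\circ\sigma)(u)=T-u\rightarrow 0$, and since $M=(0,\infty)\times\mathbb{R}^d$ the point $(0,\underline{x}_0)$ does not lie in $M$, so $\sigma$ admits no continuous extension to $u=T$. Alternatively, this follows directly from Lemma \ref{geodesicto0/infinity}. The affine parameter interval $[0,T)$ is bounded, which is exactly the statement of past incompleteness (dually to Lemma \ref{kasnerfuturecomplete}). There is no real obstacle here; the only thing to check is that $\Gamma^\mu_{00}$ vanishes, which is already visible from the Christoffel symbols listed above, so the argument reduces to writing down the explicit geodesic and observing that it runs out of spacetime in finite affine time.
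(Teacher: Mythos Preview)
Your proof is correct and follows essentially the same approach as the paper: both exhibit the curve $s\mapsto(T-s,\underline{x}_0)$ as an affinely parametrized PDTL geodesic reaching $t=0$ in finite parameter time. The only cosmetic difference is that the paper verifies the geodesic equation via the conserved quantities $g(\partial_i,\dot\gamma)=0$ together with \eqref{geodesicKasner}, whereas you read off $\Gamma^\mu_{00}=0$ directly from the listed Christoffel symbols; both justifications are valid and yield the identical curve.
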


\begin{proof}
    This can be calculated directly by constructing a PDTL geodesic that reaches $\{t=0\}$ in finite affine parameter time. Let $p\in M$ and $\gamma:[0,b)\rightarrow M$ be some unit speed PDTL geodesic with $\gamma(0)=p$. Assume the conserved quantities are $g(\partial_i,\dot{\gamma})=0$ for all $i\in\{1,...,d\}$. By (\ref{geodesicKasner}) it then holds that
    \[\dot{\gamma}(s)_t^2 = 1\]
    Thus we have
    \[\gamma(s)=(\gamma_t(0)-s,\underline{\gamma}(0))\]
    So we can reach $\{t=0\}$ in finite affine parameter time and thus $b<\infty$. Which means the Kasner spacetime is not past complete, which completes the proof.
\end{proof}

The following is an important Lemma to study the nature of the Kasner singularity.

\begin{lemma}\label{spacecoordinates}
    For all $\varepsilon>0$ there exists some $T>0$ such that for any timelike curve $\gamma:I\rightarrow M$ (for which there is some $s_0\in I$ such that $\gamma_t(s_0)<T$) and all $i\in\{1,...,d\}$, we have for all $s,\Tilde{s}\in \gamma_t^{-1}((0,T\,])$ 
    \[|\underline{\gamma}_i(s)-\underline{\gamma}_i(\Tilde{s})|<\varepsilon\]
\end{lemma}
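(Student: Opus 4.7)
The plan is to reparametrize $\gamma$ by its $t$-coordinate, use the timelike condition to obtain a pointwise bound on the derivatives of the spatial coordinates, integrate, and then exploit the algebraic fact that every Kasner exponent is strictly less than $1$ in the non-trivial case to make the resulting integral finite and small.

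Exactly as in the proof of Lemma \ref{globallyhyperbolic}, the inequality $|\dot\gamma_t(s)|>0$ on every smooth segment of a timelike curve permits reparametrization by the $t$-coordinate, so we may write $\gamma(t)=(t,\underline{\gamma}(t))$ with $\dot\gamma_t\equiv 1$. The timelike condition $g(\dot\gamma,\dot\gamma)<0$ then reduces to
\[
    \sum_{j=1}^d t^{2p_j}\,\underline{\dot{\gamma}}_j(t)^2 < 1,
\]
which in particular yields the pointwise bound $|\underline{\dot{\gamma}}_i(t)|<t^{-p_i}$ for every $i\in\{1,\ldots,d\}$. By hypothesis there exists $s_0\in I$ with $\gamma_t(s_0)<T$, and since $\gamma_t$ is strictly monotonic along a timelike curve the preimage $\gamma_t^{-1}((0,T])$ is a nonempty interval of parameter values. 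For $s,\tilde{s}$ in this interval, letting $t_1,t_2\in(0,T]$ be their $t$-values, one estimates (summing over the finitely many smooth segments in the piecewise smooth case)
\[
    |\underline{\gamma}_i(s)-\underline{\gamma}_i(\tilde{s})| \le \int_{\min(t_1,t_2)}^{\max(t_1,t_2)} |\underline{\dot{\gamma}}_i(t)|\,dt < \int_0^T t^{-p_i}\,dt = \frac{T^{1-p_i}}{1-p_i},
\]
provided $p_i<1$.

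The decisive algebraic input is that in a non-trivial Kasner spacetime every exponent strictly satisfies $p_i<1$: if some $p_{i_0}=1$ then the conditions (\ref{exponentcondition}) would force $\sum_{j\ne i_0}p_j=0$ and $\sum_{j\ne i_0}p_j^2=0$, hence $p_j=0$ for all $j\ne i_0$, which is a trivial solution. Therefore $\int_0^T t^{-p_i}\,dt$ converges for every $i$, and since only finitely many exponents are involved one may choose $T>0$ so small that $\max_i \frac{T^{1-p_i}}{1-p_i}<\varepsilon$, which finishes the proof. The main (indeed, the only non-routine) obstacle is this uniform strict inequality $p_i<1$; once it is available, the rest is an elementary estimate. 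Geometrically, it reflects the fact that the singular behaviour of the Kasner metric is weak enough in each spatial direction for the transverse coordinate displacements of any timelike curve approaching $t=0$ to remain uniformly small.
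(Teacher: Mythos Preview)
Your proof is correct and follows essentially the same approach as the paper's: reparametrize by $t$, extract the pointwise bound $|\underline{\dot\gamma}_i(t)|<t^{-p_i}$ from the timelike condition, and integrate using $p_i<1$. You add a bit more detail---the explicit value of the integral, the direct argument for why $p_i<1$ from the Kasner conditions, and the handling of piecewise smooth segments---but the core argument is identical.
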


\begin{proof}
    First note that this statement is purely geometric, meaning it is invariant under reparametrization of the curve. Since $\gamma$ is timelike we can parameterise it with respect to the $t$-coordinate, i.e. $\gamma:\Tilde{I}\rightarrow M$ with $\gamma(s)=(s,\underline{\gamma}(s))$.\\
    Fix some $i\in\{1,..,d\}$. Then for any $s\in\Tilde{I}$ (so $s>0$) it holds that
    \[0>g(\dot{\gamma}(s),\dot{\gamma}(s))=-1 + s^{2 p_i} \,\underline{\dot{\gamma}}_i(s)^2 + \sum_{j\neq i} s^{2 p_j} \,\underline{\dot{\gamma}}_j(s)^2 \geq -1 + s^{2 p_i} \,\underline{\dot{\gamma}}_i(s)^2\]
    And thus we get
    \[|\underline{\dot{\gamma}}_i(s)|< s^{-p_i}\]
    Since we are in the case where there exists a negative Kasner exponent, we know that $p_i<1$ holds for all $i\in\{1,..,d\}$. Thus, the upper bound is integrable on $(0,T\,]$, for all $T>0$, and the Lemma follows from integration.
\end{proof}

Note that the previous lemma implies that for all FDTL $\gamma:(0,b)\rightarrow M$ with $\gamma(s)=(s,\underline{\gamma}(s))$ and $i\in\{1,...,d\}$, we have that $\underline{\gamma}_i(s)$ converge as $s\searrow 0$.

\section{The main theorem}
The proof follows the strategy established for Theorem 1 in \cite{Schwarzschild2018}.\\
By Lemma \ref{curveleavesM} we know that if there exists a $C^0$-extension of the Kasner spacetime, then $\partial^+\iota(M)\cup\partial^-\iota(M)\neq\emptyset$. So we deal with the two cases separately.

\begin{theorem}\label{Kasnerfutureinextendible}
    The Kasner spacetime (with a negative exponent) is future $C^0$-inextendible.
\end{theorem}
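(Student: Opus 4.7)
The plan is to exploit future completeness of the Kasner spacetime (Lemma \ref{kasnerfuturecomplete}) together with a time-reversed application of Theorem \ref{gallowaylingsbierski}. In contrast to the past-direction case, the future direction of Kasner is geodesically well-behaved at infinity, so this half of the inextendibility argument should be essentially immediate --- no divergent-diameter arguments or isometry-based control of the causal structure are required.

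First, I argue by contradiction. Suppose a future $C^0$-extension $\iota:M\hookrightarrow\tilde M$ exists, pick $p\in\partial^+\iota(M)$, and reverse the time orientation on both $M$ and $\tilde M$. Global hyperbolicity (Lemma \ref{globallyhyperbolic}) is insensitive to the choice of time orientation, so the hypotheses of Theorem \ref{gallowaylingsbierski} are satisfied, and $p$ now lies in the past boundary of the time-reversed extension.

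Next, I apply Theorem \ref{gallowaylingsbierski} in this time-reversed setup. Inspecting its proof, the geodesic it produces is (the part inside $\iota(M)$ of) a causal maximizer $\alpha$ in the region $V_\varepsilon\subseteq\tilde U$, where $\tilde g$ is close to Minkowski and the chart $\tilde\varphi$ has bounded Euclidean extent. Consequently $L_{\tilde g}(\alpha)$ is bounded by a finite constant depending only on $\varepsilon$ and $\delta$, so the timelike geodesic segment $\sigma\subseteq V_\varepsilon\cap\iota(M)$ has finite Lorentzian length. Reversing the parameterization and pulling back under $\iota$ produces a FDTL (in the original time orientation) future-inextendible geodesic $\tilde\sigma$ in $M$ of finite total Lorentzian length.

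Finally, I reparameterize $\tilde\sigma$ by proper time to obtain an affinely parameterized future-inextendible FDTL geodesic defined on a bounded interval $(a,b)$ with $b<\infty$. This directly contradicts Lemma \ref{kasnerfuturecomplete}, which forces any such geodesic to be defined on $(a,\infty)$. The only real subtlety is the bookkeeping under the time reversal: one must carefully match ``past-inextendible'' and ``FDTL'' in the reversed setup with ``future-inextendible'' and ``FDTL'' in the original orientation, and verify that the finite-length conclusion extracted from the maximizing-curve construction in Theorem \ref{gallowaylingsbierski} survives the pullback under $\iota$.
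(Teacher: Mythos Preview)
Your proposal is correct and follows essentially the same route as the paper: argue by contradiction, apply the time-reversed Theorem~\ref{gallowaylingsbierski} (together with the chart of Lemma~\ref{lemmagraph}) to produce a future-inextendible FDTL geodesic in $M$ of finite Lorentzian length, and contradict future completeness (Lemma~\ref{kasnerfuturecomplete}). The only difference is cosmetic: the paper derives the finite-length bound explicitly by parameterizing the geodesic by the $x_0$-coordinate in the chart and using $|\tilde g_{\mu\nu}-\eta_{\mu\nu}|<\delta_0$ to bound the integrand, whereas you infer it more abstractly from the fact that the maximizing curve lies in the bounded region $V_\varepsilon$ where the metric is uniformly close to Minkowski.
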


\begin{proof}
    We prove this by contradiction. Assume there exists a future $C^0$-extension $\iota:M\hookrightarrow\Tilde{M}$ of the Kasner spacetime $(M,g)$, i.e. $\partial^+\iota(M)\neq\emptyset$. By combining Lemma \ref{lemmagraph} (with reversed time orientation or directly Proposition 1 from \cite{Schwarzschild2018}) and Theorem \ref{gallowaylingsbierski}, there exists $0<\varepsilon_1<\frac{1}{2\sqrt{d}}\varepsilon_0$, a chart $\Tilde{\varphi}:\Tilde{U}\rightarrow (-\varepsilon_0,\varepsilon_0)\times(-\varepsilon_1,\varepsilon_1)^d$ and a FDTL geodesic $\tau:[0,1)\rightarrow M$ that is future inextendible in M such that:
    \begin{itemize}
        \item $|\Tilde{g}_{\mu\nu}-\eta_{\mu\nu}|<\delta_0$ (where we will fix $\delta_0>0$ below)
        \item There exists a Lipschitz continuous function $f:(-\varepsilon_1,\varepsilon_1)^d\rightarrow(-\varepsilon_0,\varepsilon_0)$ such that:\\
        \begin{equation}\label{graph3}
            \{(x_0,\underline{x})\in R_{\varepsilon_0,\varepsilon_1}\,|\, x_0<f(\underline{x})\}\subseteq \Tilde{\varphi}(\iota(M)\cap\Tilde{U})
        \end{equation}
        and \begin{equation}\label{graph4}
            \{(x_0,\underline{x})\in R_{\varepsilon_0,\varepsilon_1}\,|\, x_0=f(\underline{x})\}\subseteq \Tilde{\varphi}(\partial^-\iota(M)\cap\Tilde{U})
        \end{equation}
        Moreover, the graph of f is achronal in $\Tilde{U}$.
        \item $\Tilde{\varphi}\circ\iota\circ\tau:[0,1)\rightarrow R_{\varepsilon_0,\varepsilon_1}$ maps into $\{(x_0,\underline{x})\in R_{\varepsilon_0,\varepsilon_1}\,|\, x_0<f(\underline{x})\}$ and, after recentering the chart, we can also arrange for $\lim_{s\rightarrow 1} (\Tilde{\varphi}\circ\iota\circ\tau)(s)=(0,...,0)$.
    \end{itemize}

    Since $\frac{5}{8}<\frac{1}{\sqrt{2}}<\frac{5}{6}$, we have $C^+_{\nicefrac{5}{6}}\subseteq C^+_{\nicefrac{1}{\sqrt{2}}}\subseteq C^+_{\nicefrac{5}{8}}$. So we can choose $\delta_0>0$ so small, such that at all points in the image of the chart $\Tilde{\varphi}$ all vectors in $C^\pm_{\nicefrac{5}{6}}$ are timelike and all vectors in $C^c_{\nicefrac{5}{8}}$ are spacelike.\\
    We have already seen before that the following holds for all $x\in R_{\varepsilon_0,\varepsilon_1}$:
    \begin{equation*}
        (x+C^+_{\nicefrac{5}{6}})\cap R_{\varepsilon_0,\varepsilon_1} \subseteq I^+(x,R_{\varepsilon_0,\varepsilon_1})\subseteq (x+C^+_{\nicefrac{5}{8}})\cap R_{\varepsilon_0,\varepsilon_1}
    \end{equation*}
    \begin{equation*}
        (x+C^-_{\nicefrac{5}{6}})\cap R_{\varepsilon_0,\varepsilon_1} \subseteq I^-(x,R_{\varepsilon_0,\varepsilon_1})\subseteq (x+C^-_{\nicefrac{5}{8}})\cap R_{\varepsilon_0,\varepsilon_1}
    \end{equation*}

    These estimates were proven as (\ref{causalitybounds+}) and (\ref{causilitybounds-}) in Chapter 2.\\
    \\
    Since the geodesic $\tau$ is FDTL and future inextendible in $M$, Lemma \ref{geodesicto0/infinity} implies that $(t\circ\tau)(s)\rightarrow \infty$ as $s\rightarrow 1$.\\
    \\
    We claim that any such timelike geodesic has infinite length.\\
    Since $\tau$ is future inextendible, Lemma \ref{kasnerfuturecomplete} implies that $\tau$ is future complete. So after reparameterising to unit speed we can calculate the length
    \[L(\tau)=\int_a^{\infty} d\Tilde{s}=\infty\]

    Since the length is invariant under reparametrization this finished the claim.\\
    \\
    For readability sake we ease notation and denote $\Tilde{\varphi}\circ\iota\circ\tau$ in the following by $\tau$. Because it is timelike we know for all $s\in[0,1)$ we have $\dot{\tau}(s)\in C^+_{\nicefrac{5}{8}}$ and thus $dx_0(\dot{\tau}(s))>0$. So after a reparametrization we can assume $\tau:[s_0,1)\rightarrow R_{\varepsilon_0,\varepsilon_1}$ to be given by $\tau(s)=(s-1,\underline{\tau}(s))$, where $s_0\in (1-\varepsilon_0,1)$. So $\dot{\tau}(s)=e_0+\sum^d_{i=1} \underline{\dot{\tau}}_i(s)e_i$ and it follows that

    \[\frac{5}{8} < \frac{<\dot{\tau}(s),e_0>_{\mathbb{R}^{d+1}}}{\|\dot{\tau}(s)\|_{\mathbb{R}^{d+1}}} = \frac{1}{\sqrt{1+\|\underline{\dot{\tau}}(s)\|^2_{\mathbb{R}^{d+1}}}}\]

    Which shows that for all $s\in[s_0,1)$ we have
    \[\|\underline{\dot{\tau}}(s)\|_{\mathbb{R}^{d+1}}< \frac{\sqrt{39}}{5}\]

    Together with the uniform bound on the metric components, it follows that there exists some constant $C>0$ such that
    \[\int^1_{s_0} \sqrt{-\Tilde{g}(\dot{\tau}(s),\dot{\tau}(s))}\,ds = \int^1_{s_0} \sqrt{-\Big(\Tilde{g}_{00} + 2 \sum^d_{i=1} \Tilde{g}_{0i}\underline{\dot{\tau}}_i(s) + \sum^d_{i,j=1} \Tilde{g}_{ij}\underline{\dot{\tau}}_i(s)\underline{\dot{\tau}}_j(s)\Big)}\,ds \leq C<\infty\]

    so $\tau$ has finite length, which is a contradiction.
\end{proof}

We are left to find a contradiction for the more interesting case, namely that no timelike curve can leave through the curvature singularity at $\{t=0\}$.

\begin{theorem}\label{kasnerpastinextendible}
    The Kasner spacetime (with a negative exponent) is past $C^0$-inextendible.
\end{theorem}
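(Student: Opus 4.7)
The plan is to argue by contradiction along the lines of the Schwarzschild strategy of \cite{Schwarzschild2018}. Assume a past $C^0$-extension $\iota:M\hookrightarrow\tilde{M}$ exists. First I invoke Theorem \ref{gallowaylingsbierski} together with Lemma \ref{lemmagraph} to obtain a point $p\in\partial^-\iota(M)$, a chart $\tilde{\varphi}:\tilde{U}\to R_{\varepsilon_0,\varepsilon_1}$ with $\tilde{\varphi}(p)=0$ and $|\tilde{g}_{\mu\nu}-\eta_{\mu\nu}|<\delta_0$, an achronal Lipschitz graph $f:(-\varepsilon_1,\varepsilon_1)^d\to(-\varepsilon_0,\varepsilon_0)$ with $\iota(M)\cap\tilde{U}$ sitting above the graph, and a FDTL past-inextendible geodesic $\tau:(0,1]\to M$ with $\tilde{\varphi}\circ\iota\circ\tau$ approaching the origin along the $x^0$-axis. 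By Lemmas \ref{geodesicto0/infinity} and \ref{spacecoordinates} the Kasner coordinates of $\tau$ satisfy $t(\tau(s))\to 0$ and $\underline{\tau}(s)\to\underline{x}_*\in\mathbb{R}^d$ as $s\to 0^+$; applying the space-translation isometries $F_{i,-x_{*,i}}$ (which are genuine isometries of $M$) I normalize $\underline{x}_*=0$. I then fix an index $i_0$ with $p_{i_0}<0$ (available in the non-trivial case) and a reference point $q:=\tilde{\varphi}^{-1}(\alpha,0,\ldots,0)\in\iota(M)\cap\tilde{U}$ on the axis, for some small $\alpha>0$, writing $q_M:=\iota^{-1}(q)$.

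The crux of the proof is a \emph{confinement step}: to show that for $\alpha$ small enough (and possibly after shrinking the chart) the whole past $\iota(I^-(q_M,M))$ is contained in the backward chart wedge $\tilde{\varphi}^{-1}\bigl((\tilde{\varphi}(q)+C^-_{\nicefrac{5}{8}})\cap R_{\varepsilon_0,\varepsilon_1}\bigr)\subseteq\tilde{U}$. This is the Kasner analogue of the most technical step in Sbierski's Schwarzschild argument. Schwarzschild exploits spherical symmetry and the stationary Killing field $\partial_t$; for Kasner one has only the $d$ translation Killing fields $\partial_{x_i}$, so the confinement argument must be rebuilt around these, together with the future one-connectedness (Lemma \ref{futureoneconnected}), the causality estimates (\ref{causalitybounds+})–(\ref{causilitybounds-}) inside the near-Minkowski chart, the convergence of the spatial coordinates along past-directed timelike curves approaching the singularity (Lemma \ref{spacecoordinates}), and the spacelike/achronal nature of the graph $f$. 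This is the step I expect to be the main obstacle, and the one where the argument departs most substantively from Schwarzschild.

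Granted the confinement, I finish by comparing a Kasner lower bound with a chart upper bound for the intrinsic Riemannian distance on the Cauchy hypersurfaces $\Sigma_{t_0}:=\{t=t_0\}$ (Cauchy by Lemma \ref{globallyhyperbolic}). Using the Kasner null geodesic equation, parametrised by $t$ with $|\dot{x}_i|\leq t^{-p_i}$, the $x_{i_0}$-extent of $I^-(q_M,M)\cap\Sigma_{t_0}$ converges as $t_0\to 0$ to $2\,t_q^{1-p_{i_0}}/(1-p_{i_0})>0$, where $t_q:=t(q_M)$; hence for all $t_0$ sufficiently small there exist $r_1,r_2\in I^-(q_M,M)\cap\Sigma_{t_0}$ with $|x_{i_0}(r_1)-x_{i_0}(r_2)|\geq\delta$ for a fixed $\delta>0$. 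Since the Riemannian metric induced on $\Sigma_{t_0}$ by $g$ is $\sum_i t_0^{2p_i}\,dx_i^2$, its intrinsic distance satisfies $d_{\Sigma_{t_0}}(r_1,r_2)\geq t_0^{p_{i_0}}\,\delta$, which diverges as $t_0\to 0$ because $p_{i_0}<0$.

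For the matching upper bound I use the confinement: since $\iota(r_1),\iota(r_2)\in\tilde{U}$, the surface $\iota(\Sigma_{t_0})\cap\tilde{U}$ is a smooth spacelike graph over its chart spatial projection with uniformly bounded slopes (by the near-Minkowski bound), and I connect $\iota(r_1)$ to $\iota(r_2)$ along it by a curve of chart-Euclidean length at most $2\sqrt{d}\,\varepsilon_1$, hence of Kasner Riemannian length at most some constant $C=C(\varepsilon_1,\delta_0)$. Combined with the lower bound this gives $t_0^{p_{i_0}}\,\delta\leq C$ for all small $t_0$, contradicting $p_{i_0}<0$. The delicate point — beyond the confinement step itself — is the mutual compatibility of confinement and divergent Kasner spread in the negative-exponent direction: the chart is narrow in $x_{i_0}$ by a factor of order $t^{|p_{i_0}|}$, while the past-null-cone spread is of order $t_q^{1-p_{i_0}}$, so the two estimates coexist only in a delicate regime, and threading the separated pair $r_1,r_2$ through both estimates is what the Kasner-adapted version of the Schwarzschild argument must achieve.
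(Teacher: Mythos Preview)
Your overall architecture matches the paper's proof: chart-plus-graph setup near a point of $\partial^-\iota(M)$, a confinement step trapping $\iota(I^-(\gamma(\mu),M))$ inside the chart, a uniform upper bound on $d_\Sigma$ coming from the near-Minkowski chart metric, and a blow-up of $d_{\Sigma_{t_0}}$ along $\{t=t_0\}$ using the negative exponent. Two remarks.

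First, the confinement step. You correctly flag it as the crux and list the right ingredients, but you do not supply the mechanism, and this is where the paper does the real work. The paper's device is a compact \emph{timelike-separating} set: fix $0<y_0^-<y_0^+$ on the chart axis and set
\[
K:=\Bigl[\bigcup_{0<s<y_0^+}I^+(\gamma(s),M)\cap I^-(\gamma(y_0^+),M)\Bigr]\setminus I^-(\gamma(y_0^-),M).
\]
Future one-connectedness (Lemma \ref{futureoneconnected}) identifies the $M$-diamond here with the chart diamond, so $\iota(K)$ lands in a precompact chart region. The key geometric input is that $\overline{K}$ is compact in $M$: Lemma \ref{spacecoordinates} bounds the spatial coordinates and, crucially, shows that points of the diamond with small $t$ are forced back into $I^-(\gamma(y_0^-),M)$, so $K$ is bounded away from $\{t=0\}$. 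One then thickens $\overline{K}$ to $\overline{K}_\delta\subseteq W:=(\tilde{\varphi}\circ\iota)^{-1}(\text{chart wedge})$, chooses $\mu\in(0,y_0^-)$ via Lemma \ref{spacecoordinates} so that every point of $I^-(\gamma(\mu),M)$ has Kasner spatial coordinates within $\delta/2$ of those of $\gamma$, and uses the translation isometries $F_{i,\lambda}$ to slide any FDTL curve from $I^-(\gamma(\mu),M)$ to $\gamma(x_0^+)$ onto one starting on $\gamma$ itself; the slid curve meets $K$, hence the original meets $\overline{K}_\delta\subseteq W$. A final cone-boundary argument then yields $\iota(I^-(\gamma(\mu),M))\subseteq$ chart wedge. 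Incidentally, the paper does not use a geodesic here at all; the curve $\gamma$ is simply the $x^0$-axis in the chart, and invoking Theorem \ref{gallowaylingsbierski} (which does not put the geodesic on the axis) is unnecessary for the past case.

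Second, your closing worry about ``mutual compatibility'' is misplaced and reflects a conflation of the two coordinate systems. The confinement is in the \emph{chart} coordinates on $\tilde{M}$; the blow-up $d_{\Sigma_{t_0}}(r_1,r_2)\geq t_0^{p_{i_0}}\delta$ is in the \emph{Kasner} coordinates on $M$. The chart has no ``$x_{i_0}$'' in the Kasner sense, and its width $\varepsilon_1$ is a fixed number independent of $t_0$; there is no delicate regime to thread. The pair $r_1,r_2$ is obtained (as in the paper) simply from the openness of $I^-(\gamma(\mu),M)$ at an interior point $\gamma(\mu/2)$ together with the fact that $-\partial_t$ is past-directed timelike, giving a fixed Kasner-$x_{i_0}$-gap $\delta$ that persists down to all small $t_0$. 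Your null-cone-width computation $2t_q^{1-p_{i_0}}/(1-p_{i_0})$ is correct but not needed.
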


\begin{proof}
    The proof is also by contradiction. Assume there exists a past $C^0$-extension $\iota:M\hookrightarrow\Tilde{M}$ of the Kasner spacetime $(M,g)$. By Lemma \ref{curveleavesM} there exists $\Tilde{\gamma}:[0,1]\rightarrow \Tilde{M}$ such that $\Tilde{\gamma}((0,1])\subseteq\iota(M)$ is FDTL and $\Tilde{\gamma}(0)\in\partial^-\iota(M)$.
    By  combining Lemma \ref{lemmagraph} and Lemma \ref{chart}, there exists $0<\varepsilon_1<\frac{1}{2\sqrt{d}}\varepsilon_0$, a chart $\Tilde{\varphi}:\Tilde{U}\rightarrow (-\varepsilon_0,\varepsilon_0)\times(-\varepsilon_1,\varepsilon_1)^d$ such that:
    \begin{itemize}
        \item $(\Tilde{\varphi}\circ\Tilde{\gamma})(s)=(s,0,...,0)$ for all $s\in[0,\varepsilon_0)$
        \item $|\Tilde{g}_{\mu\nu}-\eta_{\mu\nu}|<\delta_0$ (where $\delta_0>0$ is chosen as in Theorem \ref{Kasnerfutureinextendible})
        \item There exists a Lipschitz continuous function $f:(-\varepsilon_1,\varepsilon_1)^d\rightarrow(-\varepsilon_0,\varepsilon_0)$ such that:\\
        \begin{equation}\label{graph3}
            \{(x_0,\underline{x})\in R_{\varepsilon_0,\varepsilon_1}\,|\, x_0>f(\underline{x})\}\subseteq \Tilde{\varphi}(\iota(M)\cap\Tilde{U})
        \end{equation}
        and \begin{equation}\label{graph4}
            \{(x_0,\underline{x})\in R_{\varepsilon_0,\varepsilon_1}\,|\, x_0=f(\underline{x})\}\subseteq \Tilde{\varphi}(\partial^-\iota(M)\cap\Tilde{U})
        \end{equation}
        Moreover, the graph of f is achronal in $\Tilde{U}$.
    \end{itemize}

    We assume again the same causality bounds as (\ref{causalitybounds+}) and (\ref{causilitybounds-}) in Chapter 2.\\
    Set $\gamma:=\iota^{-1}\circ\Tilde{\gamma}|_{(0,\varepsilon_0)}$, which is a FDTL and past inextendible curve in $M$.\\
    \\
    The rest of the proof proceeds in three steps.
    \begin{addmargin}[5pt]{0pt}
    \underline{\textbf{Step 1:}} Show that there exists a $\mu>0$ such that
    \begin{enumerate}
        \item $\iota(I^-(\gamma(\mu),M)) \subseteq \Tilde{\varphi}^{-1}(\{(x_0,\underline{x})\in R_{\varepsilon_0,\varepsilon_1}\,|\,x_0>f(\underline{x})\})$
        \item $(\frac{49}{50}\varepsilon_0,\varepsilon_0)\times(-\varepsilon_1,\varepsilon_1)^d \subseteq I^+((\Tilde{\varphi}\circ\Tilde{\gamma})(\mu),R_{\varepsilon_0,\varepsilon_1})$
    \end{enumerate}
    \end{addmargin}
    Remember that we assumed that $0<\varepsilon_1<\frac{1}{2\sqrt{d}}\varepsilon_0$. Now choose $x^+:=(x^+_0,0,...,0)$ with $0<x^+_0<\varepsilon_0$ and $x^-:=(x^-_0,0,...,0)$ with $-\varepsilon_0>x^-_0>0$ such that the closure of $(x^++C^-_{\nicefrac{5}{6}})\cap (x^-+C^+_{\nicefrac{5}{6}})$ in $R_{\varepsilon_0,\varepsilon_1}$ is compact.
    Then choose $y^+:=(y^+_0,0,...,0)$ with $0<y^+_0<\frac{1}{5}\varepsilon_0$ such that the closure of $C^+_{\nicefrac{5}{8}}\cap(y^++C^-_{\nicefrac{5}{8}})$ in $R_{\varepsilon_0,\varepsilon_1}$ is contained in $(x^++C^-_{\nicefrac{6}{7}})\cap (x^-+C^+_{\nicefrac{6}{7}})$. In the following we show that these choices imply
    \begin{equation}\label{step1.2done}
        (\frac{49}{50}\varepsilon_0,\varepsilon_0)\times(-\varepsilon_1,\varepsilon_1)^d \subseteq I^+(y^+,R_{\varepsilon_0,\varepsilon_1})
    \end{equation}

    \begin{figure}[!htp]
    \centering
    \includegraphics[width=0.48\columnwidth]{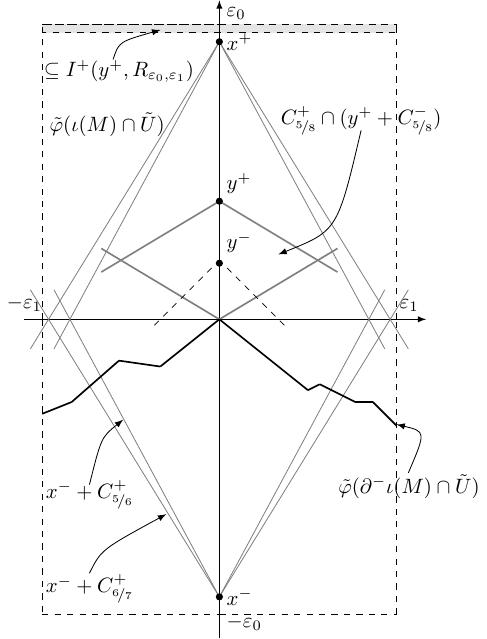}
    \caption{Step 1}
    \label{fig:step1}
    \end{figure}
   
    We check for what $y^+_0>0$ we can guarantee that that the straight line connecting the origin to $(\frac{49}{50}\varepsilon_0-y^+_0,\underline{x})$ is still timelike for all $\underline{x}\in(-\varepsilon_1,\varepsilon_1)^d$. Note that for these $y^+_0$ (\ref{step1.2done}) follows immediately.
    We again want to find bounds that even $(\frac{49}{50}\varepsilon_0-y^+_0,\varepsilon_1,...,\varepsilon_1)$ is timelike, which we can ensure by arranging for $(\frac{49}{50}\varepsilon_0-y^+_0,\varepsilon_1,...,\varepsilon_1)\in C^+_{\nicefrac{5}{6}}$, which would imply the previous statement.
    \[\frac{5}{6}<\frac{\frac{49}{50}\varepsilon_0-y^+_0}{\sqrt{(\frac{49}{50}\varepsilon_0-y^+_0)^2+d\varepsilon_1^2}} \iff \varepsilon_1<\sqrt{\frac{11}{25}}(\frac{49}{50}\varepsilon_0-y^+_0)\frac{1}{\sqrt{d}}\]
    Since we assumed that $\varepsilon_1<\frac{1}{2\sqrt{d}}\varepsilon_0$ the right hand side is definitely a true statement if
    \[\frac{\varepsilon_0}{2} \leq \sqrt{\frac{11}{25}}(\frac{49}{50}\varepsilon_0-y^+_0)\]
    which is equivalent to
    \[y_0^+ \leq (\frac{49}{50}-\frac{1}{2}\sqrt{\frac{25}{11}})\varepsilon_0\]

    So in particular everything works out if we assume the bound $0<y_0^+<\frac{1}{5}\varepsilon_0$. We fix such a $y^+=(y^+_0,0,...,0)$ and note that for all $0<\Tilde{x}_0<y^+_0$ the property (\ref{step1.2done}) is also satisfied by $\Tilde{x}:=(\Tilde{x}_0,0,...,0)$.\\
    \\
    \\
    The proof of Step 1 is based on five claims that we will prove in the following.\\
    \underline{Claim 1:} for all $0<s<y^+_0$ we have
    \begin{equation}\label{regioninchart}
        \Tilde{\varphi}^{-1}\Big(I^+\big((s,0,...,0),R_{\varepsilon_0,\varepsilon_1}\big)\cap I^-(y^+,R_{\varepsilon_0,\varepsilon_1})\Big) = \iota(I^+\big(\gamma(s),M\big)\cap I^-(\gamma(y_0^+),M))
    \end{equation}

    The inclusion "$\subseteq$"\ \,follows from the fact that the graph of $f$ is achronal, so all FDTL curves in $R_{\varepsilon_0,\varepsilon_1}$ that start in the future of the graph (for example in $(s,0,...,0)$ for $s>0$) stay above it. So in particular, they are contained in $\iota(M)$.\\
    For the inclusion "$\supseteq$", let $\sigma:[s,y^+_0]\rightarrow M$ be a FDTL curve from $\gamma(s)$ to $\gamma(y^+_0)$. We want to show that $\iota\circ\sigma$ maps into $\Tilde{U}$.
    There is a timelike homotopy $\Gamma:[0,1]\times [s,y^+_0]\rightarrow M$ with fixed endpoints between $\gamma|_{[s,y_0^+]}$ and $\sigma$ by Lemma \ref{futureoneconnected}. Note that $\iota\circ\Gamma:[0,1]\times [s,y^+_0]\rightarrow \Tilde{M}$ is then a timelike homotopy with fixed endpoints in $\Tilde{M}$. So we want to show that $(\iota\circ\sigma)(\cdot)=(\iota\circ\Gamma)(1,\cdot)$ maps into $\Tilde{U}$.\\
    Let $J:=\{t\in[0,1]\,|\, (\iota\circ\Gamma)(t,[s,y^+_0])\subseteq\Tilde{U}\}$. Since $(\iota\circ\Gamma)(0,\cdot)=\gamma|_{[s,y^+_0]}$, we have $0\in J$, so $J$ is non-empty. Furthermore, since $\Tilde{U}$ is open it follows that $J$ is open in $[0,1]$. We have arranged above that $I^+(\Tilde{\gamma}(s),\Tilde{U})\cap I^-(\Tilde{\gamma}(y^+_0),\Tilde{U})$ is precompact in $\Tilde{U}$. Thus its closure in $\Tilde{M}$ is contained in $\Tilde{U}$, which shows that $J$ is also closed. By the connectedness of $[0,1]$ we get $J=[0,1]$, i.e. $(\iota\circ\sigma)(\cdot)=(\iota\circ\Gamma)(1,\cdot)$ maps into $\Tilde{U}$.\\
    \\
    Together with Proposition \ref{proposition2} we then get that
    \begin{equation}
        \Tilde{\varphi}^{-1}(I^+(0,R_{\varepsilon_0,\varepsilon_1})\cap I^-(y^+,R_{\varepsilon_0,\varepsilon_1}) = \iota\Big(\bigcup_{0<s<y_0^+} I^+(\gamma(s),M)\cap I^-(\gamma(y_0^+),M)\Big)
    \end{equation}

    This allows us to make statements about the causal diamond, i.e. the right hand side, by just computing in the chart.\\
    We fix some $y_0^-\in(0,y_0^+)$, we define the set
    \[K:=\Big[\bigcup_{0<s<y_0^+} I^+(\gamma(s),M)\cap I^-(\gamma(y_0^+),M)\Big]\setminus I^-(\gamma(y_0^-),M)\]

    \begin{figure}[!htp]
    \centering
    \includegraphics[width=0.6\columnwidth]{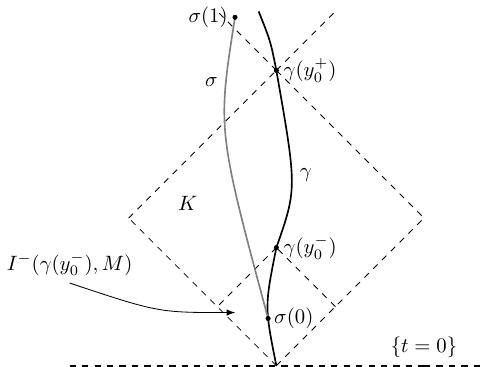}
    \caption{For the proof of claim 2}
    \label{fig:setK}
    \end{figure}

    \underline{Claim 2:} $K$ timelike-separates $\gamma((0,y^-_0))$ from $I^+(\gamma(y_0^+),M)$. (see Def \ref{timelike-separation})\\
    \\
    To prove this, let $\sigma:[0,1]\rightarrow M$ be some FDTL curve with $\sigma(0)\in\gamma((0,y^-_0))$ and $\sigma(1)\in I^+(\gamma(y_0^+),M)$. Since $\sigma$ then starts in $I^-(\gamma(y_0^-),M)$, which is open, we see that $\sigma^{-1}(I^-(\gamma(y_0^-),M))$ is non-empty and open in $[0,1]$. Furthermore, it is connected, because if there is some $s_0\in\sigma^{-1}(I^-(\gamma(y_0^-),M))$, then also $[0,s_0]\subseteq\sigma^{-1}(I^-(\gamma(y_0^-),M))$ follows since $\sigma$ is a FDTL curve.
    Finally, the Kasner spacetime is causal, meaning that there exists no closed timelike curves, since it is globally hyperbolic. It follows that $I^+(\gamma(y_0^+),M)\cap I^-(\gamma(y_0^-),M) \subseteq I^+(\gamma(y_0^-),M)\cap I^-(\gamma(y_0^-),M)=\emptyset$, thus we get $\sigma(1)\notin I^-(\gamma(y_0^-),M)$.\\
    This shows the existence of a $\Delta_-\in(0,1)$ such that $\sigma^{-1}(I^-(\gamma(y_0^-),M))=[0,\Delta_-)$.\\
    Using the same argument we can proof that there exists a $\Delta_+\in(0,1)$ such that
    $\sigma^{-1}(I^-(\gamma(y_0^+),M))=[0,\Delta_+)$.\\
    We claim that $\Delta_-<\Delta_+$. Note that the Kasner spacetime $M$ is globally hyperbolic and has a smooth metric, it holds that $\overline{I^-(\gamma(y_0^-),M)}=J^-(\gamma(y_0^-),M)$ (e.g. in \cite{O'Neill}).\\
    Together with the so-called push-up principle for smooth Lorentzian metrics we get
    \[\overline{I^-(\gamma(y_0^-),M)}=J^-(\gamma(y_0^-),M) \subseteq J^-\big(I^-(\gamma(y_0^+),M)\big) \overset{\textup{push-up}}{=} I^-(\gamma(y_0^+),M)\]
    So $\sigma(\Delta_-)\in\overline{I^-(\gamma(y_0^-),M)}\subseteq I^-(\gamma(y_0^+),M)$ which, combined with the causality of Kasner and $\sigma$ being FDTL, this implies $\Delta_-<\Delta_+$.\\
    For all $s_0\in(\Delta_-,\Delta_+)$, we have $\sigma(s_0)\in I^-(\gamma(y_0^+),M)\setminus I^-(\gamma(y_0^-),M)$. And since $\sigma(0)\in\gamma((0,y_0^-))$, we get $\sigma(s_0)\in \bigcup_{0<s<y_0^+} I^+(\gamma(s),M)$. So in total $\sigma(s_0)\in K$, which proves the claim.\\
    \\
    \\
    \underline{Claim 3:} $\overline{K}$ is compact in $M$.\\
    \\
    Similar to Lemma \ref{spacecoordinates}, we set
    \[S:=\max_{i=1,...,d} {\int_0^{\gamma_t(y_0^+)} s^{-p_i} ds}\]
    Note that since there is at least one negative Kasner exponent, we know that $p_i<1$ for all $i\in\{1,...,d\}$, thus $S<\infty$. This shows that
    \[\overline{K}\subseteq (0,\gamma_t(y_0^+)]\times[\underline{\gamma}_1(y_0^+)-S,\underline{\gamma}_1(y_0^+)+S]\times...\times [\underline{\gamma}_d(y_0^+)-S,\underline{\gamma}_d(y_0^+)+S]\]
    We want to show that there exists a $T>0$ such that for all $t\leq T$ and $(t,\underline{x})\in \big(\bigcup_{0<s<y_0^+} I^+(\gamma(s),M)\cap I^-(\gamma(y_0^+),M)\big)$, we have that $(t,\underline{x})\notin K$, i.e. $(t,\underline{x})\in I^-(\gamma(y_0^-),M)$.\\
    Since $I^-(\gamma(y_0^-),M)$ is open and $\gamma$ is FDTL, it holds that for all $s_0\in(0,y_0^-)$ there exists a $\delta>0$ such that
    \[\{\gamma_t(s_0)\}\times (\underline{\gamma}_1(s_0)-\delta,\underline{\gamma}_1(s_0)+\delta)\times ...\times (\underline{\gamma}_d(s_0)-\delta,\underline{\gamma}_d(s_0)+\delta) \subseteq I^-(\gamma(y_0^-),M)\]

    By the isometries of the Kasner spacetime we know that for all $x_1,...,x_d\in(-\delta,\delta)$ the curve $\sigma:(0,s_0]\rightarrow M$ defined by $\sigma(s)=(\gamma_t(s),\underline{\gamma}_1(s)+x_1,...,\underline{\gamma}_d(s)+x_d)$ is then also timelike. So in total we see that for all $s_0\in(0,y_0^-)$ there exists a $\delta>0$ such that
    \begin{equation}\label{claim-compact}
        \{\gamma_t(s)\}\times (\underline{\gamma}_1(s)-\delta,\underline{\gamma}_1(s)+\delta)\times ...\times (\underline{\gamma}_d(s)-\delta,\underline{\gamma}_d(s)+\delta) \subseteq I^-(\gamma(y_0^-),M)
    \end{equation}
    holds for all $s\in(0,s_0]$.\\
    Choose some $s_0\in(0,y_0^+)$ and let $\delta>0$ such that (\ref{claim-compact}) holds. Without loss of generality, we choose some $T\in(0,\gamma_t(y_0^+))$ such that Lemma \ref{spacecoordinates} holds for $\frac{\delta}{2}$ (i.e. for $\varepsilon=\frac{\delta}{2}$).\\
    Note that if $(T,\underline{x})\in \big(\bigcup_{0<s<y_0^+} I^+(\gamma(s),M)\cap I^-(\gamma(y_0^+),M)\big)$ then there must exist some FDTL curve $\sigma:[0,1]\rightarrow M$ with $\sigma(0)\in\gamma((0,y_0^+))$ and $\sigma(1)=(T,\underline{x})$. Let $s_1,s_2\in(0,y_0^+)$ such that $\gamma(s_1)=\sigma(0)$ and $\gamma_t(s_2)=\sigma_t(1)=T$. It follows that
    \[|\underline{\sigma}_i(1)-\underline{\gamma}_i(s_2)|\leq |\underline{\sigma}_i(1)-\underline{\sigma}_i(0)| + |\underline{\sigma}_i(0)-\underline{\gamma}_i(s_2)| = |\underline{\sigma}_i(1)-\underline{\sigma}_i(0)| + |\underline{\gamma}_i(s_1)-\underline{\gamma}_i(s_2)| < \delta\]
    since Lemma \ref{spacecoordinates} holds for $\sigma$ and $\gamma$. Because $\sigma$ was arbitrary, we get for all $s\in(0,\varepsilon_0)$
    \begin{equation*}
        \begin{split}
            I^+(\gamma(s),M)\cap \big\{ t\leq T\big\} & \subseteq \bigcup_{0<\Tilde{s}\leq s_0} \Big[ \{\gamma_t(\Tilde{s})\}\times (\underline{\gamma}_1(\Tilde{s})-\delta,\underline{\gamma}_1(\Tilde{s})+\delta)\times ...\times (\underline{\gamma}_d(\Tilde{s})-\delta,\underline{\gamma}_d(\Tilde{s})+\delta) \Big]\\
            & \subseteq I^-(\gamma(y_0^-),M)
        \end{split}
    \end{equation*}
    by (\ref{claim-compact}). This implies that
    \[K\subseteq (T,\gamma_t(y_0^+))\times [\underline{\gamma}_1(y_0^+)-S,\underline{\gamma}_1(y_0^+)+S]\times...\times [\underline{\gamma}_d(y_0^+)-S,\underline{\gamma}_d(y_0^+)+S]\]
    and thus
    \[\overline{K}\subseteq [T,\gamma_t(y_0^+)]\times [\underline{\gamma}_1(y_0^+)-S,\underline{\gamma}_1(y_0^+)+S]\times...\times [\underline{\gamma}_d(y_0^+)-S,\underline{\gamma}_d(y_0^+)+S]\]
    is compact, which finishes the claim.\\
    \\
    \\
    Note that since $\overline{K}$ is compact and $\Tilde{\varphi}$ and $\iota$ are continuous, we get
    \[\Tilde{\varphi}(\iota(\overline{K})) = \overline{\Tilde{\varphi}(\iota(K))}\]
    
    By the choice of $y^+\in R_{\varepsilon_0,\varepsilon_1}$ in the beginning of Step 1, (\ref{regioninchart}) and the definition of $K$, we obtain
    \[\overline{\Tilde{\varphi}(\iota(K))} \subseteq \overline{I^+(0,R_{\varepsilon_0,\varepsilon_1})\cap I^-(y^+,R_{\varepsilon_0,\varepsilon_1})} \subseteq (x^++C^-_{\nicefrac{6}{7}})\cap (x^-+C^+_{\nicefrac{6}{7}})\]

    Thus, we can define
    \[W:= (\Tilde{\varphi}\circ\iota)^{-1} \Big((x^++C^-_{\nicefrac{6}{7}})\cap (x^-+C^+_{\nicefrac{6}{7}})\Big) \subseteq M\]
    which is an open neighborhood of $\overline{K}$ in $M$ by the discussion above.\\
    \\
    \\
    \underline{Claim 4:} $\exists\,\mu>0$ such that $I^-(\gamma(\mu),M)$ is timelike separated from $\gamma(x_0^+)$ by $W$.\\
    \\
    We can define the Euclidean metric on the Kasner spacetime $d:M\times M\rightarrow[0,\infty)$. Note that the map $M\rightarrow[0,\infty)$ given by
    \[(t,\underline{x})\mapsto d((t,\underline{x}),M\setminus W)= \inf_{(\Tilde{t},\Tilde{\underline{x}})\in M\setminus W} d((t,\underline{x}),(\Tilde{t},\Tilde{\underline{x}}))\]
    is continuous. Since $\overline{K}$ is compact and disjoint from the closed set $M\setminus W$, we see that this map must attain a minimum $\delta>0$ over $\overline{K}$. This means that
    \begin{equation}\label{K_delta}
        \overline{K}_\delta:= \big\{(t,\underline{x})\in M\,|\,d((t,\underline{x}),\overline{K})<\delta \big\} \subseteq W
    \end{equation}

    and by possibly choosing $\delta>0$ even smaller we can also assume that
    \begin{equation}\label{delta2}
        \{\gamma_t(x_0^+)\}\times(\underline{\gamma}_1(x_0^+)-\delta,\underline{\gamma}_1(x_0^+)+\delta)\times ...\times(\underline{\gamma}_d(x_0^+)-\delta,\underline{\gamma}_d(x_0^+)+\delta) \subseteq I^+(\gamma(y_0^+),M)
    \end{equation}

    We fix this $\delta>0$ such that (\ref{K_delta}) and (\ref{delta2}) hold. By Lemma \ref{spacecoordinates} there exists $\mu\in(0,y_0^-)$ such that for all $(t_0,\underline{x}_0)\in I^-(\gamma(\mu),M)$, we have
    \begin{equation}\label{mu}
        |\underline{x}_0-\underline{\gamma}(\mu)|<\frac{\delta}{2}
    \end{equation}
    
    We show that this implies that $\overline{K}_\delta$ timelike separates $I^-(\gamma(\mu),M)$ from $\gamma(x_0^+)$.\\
    Let $\sigma:[0,1]\rightarrow M$ be FDTL with $\sigma(0)\in I^-(\gamma(\mu),M)$ and $\sigma(1)=\gamma(x_0^+)$. Choose $s_0\in(0,\mu)$ such that $\gamma_t(s_0)=\sigma_t(0)$. From (\ref{mu}) we see that for all$\,i\in\{1,...,d\}$
    \begin{equation}\label{mu2}
        |\underline{\sigma}_i(0)-\underline{\gamma}_i(s_0)| \leq |\underline{\sigma}_i(0)-\underline{\gamma}_i(\mu)| + |\underline{\gamma}_i(\mu)-\underline{\gamma}_i(s_0)| < \frac{\delta}{2}+\frac{\delta}{2} =\delta
    \end{equation}
    Now define a new curve $\hat{\sigma}:[0,1]\rightarrow M$ as
    \[\hat{\sigma}(s):=(\sigma_t(s),\underline{\sigma}_1(s)+\underline{\gamma}_1(s_0)-\underline{\sigma}_1(0),...,\underline{\sigma}_d(s)+\underline{\gamma}_d(s_0)-\underline{\sigma}_d(0))\]
    It easy to see by the isometries of the Kasner spacetime that $\hat{\sigma}$ is also a FDTL curve. Furthermore, this curve starts in $\hat{\sigma}(0)=\gamma(s_0)\in\gamma((0,y_0^-))$ and ends in
    \[\hat{\sigma}(1)=(\sigma_t(1),\underline{\sigma}_1(1)+\underline{\gamma}_1(s_0)-\underline{\sigma}_1(0),...,\underline{\sigma}_d(1)+\underline{\gamma}_d(s_0)-\underline{\sigma}_d(0))\]
    Note that by (\ref{mu2}), we thus have $\hat{\sigma}(1)\in I^+(\gamma(y_0^+),M)$.\\
    By our previous observation, we know that then there exist a $\hat{s}\in[0,1]$ such that $\hat{\sigma}(\hat{s})\in K$. However, this implies that $\sigma(\hat{s})\in \overline{K}_\delta$, so $I^-(\gamma(\mu),M)$ is timelike-separated from $\gamma(x_0^+)$ by $\overline{K}_\delta$.
    Note that this, in particular, implies that $W$ timelike-separates $I^-(\gamma(\mu),M)$ from $\gamma(x_0^+)$ which proves the claim.\\
    \\
    \\
    \underline{Claim 5:} $\iota(I^-(\gamma(\mu),M))\subseteq \Tilde{\varphi}^{-1}\big((x^++C^-_{\nicefrac{6}{7}}) \cap (x^-+C^+_{\nicefrac{6}{7}})\big)$.\\
    \\
    We argue by contradiction. Let $\sigma:[0,1]\rightarrow M$ be PDTL with $\sigma(0)=\gamma(\mu)$ and assume there is some $\Tilde{s}\in[0,1]$ such that $(\Tilde{\varphi}\circ\iota\circ\sigma)(\Tilde{s})\notin (x^++C^-_{\nicefrac{6}{7}}) \cap (x^-+C^+_{\nicefrac{6}{7}})$. Let
    \[s_0:=\sup \{t\in[0,1]\,|\,(\Tilde{\varphi}\circ\iota\circ\sigma)(s)\in (x^++C^-_{\nicefrac{6}{7}}) \cap (x^-+C^+_{\nicefrac{6}{7}})\,\forall s\in[0,t)\}\]

    It is easy to see that $0<s_0\leq 1$ and $(\Tilde{\varphi}\circ\iota\circ\sigma)(s_0)\in \partial\big((x^++C^-_{\nicefrac{6}{7}}) \cap (x^-+C^+_{\nicefrac{6}{7}})\big)$. Since all vectors in $C^+_{\nicefrac{5}{6}}$ are future directed timelike and $C^+_{\nicefrac{6}{7}}\subseteq C^+_{\nicefrac{5}{6}}$ we can find a FDTL curve from $(\Tilde{\varphi}\circ\iota\circ\sigma)(s_0)$ to $x^+$, that does not intersect $(x^++C^-_{\nicefrac{6}{7}}) \cap (x^-+C^+_{\nicefrac{6}{7}})$.
    Take for example a curve that lies on $\partial\big((x^++C^-_{\nicefrac{6}{7}}) \cap (x^-+C^+_{\nicefrac{6}{7}})\big)$ that is depicted in Figure \ref{fig:schluss} on the next page.\\
    
    \begin{figure}[!htp]
    \centering
    \includegraphics[width=0.44\columnwidth]{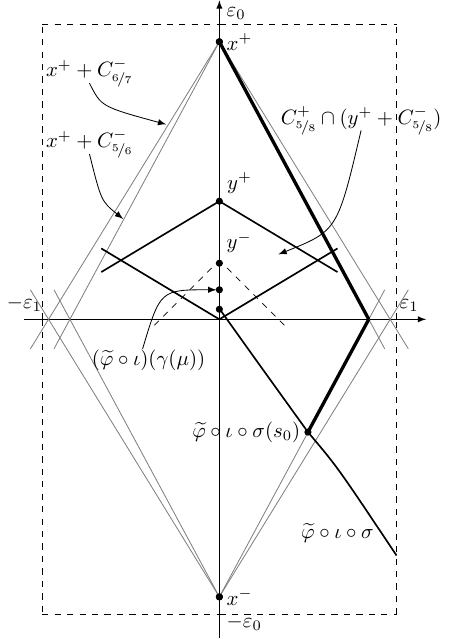}
    \caption{One possible curve}
    \label{fig:schluss}
    \end{figure}
    
    However, this curve corresponds to a FDTL curve from $\sigma(s_0)\in I^-(\gamma(\mu),M)$ to $\gamma(x_0^+)$ in $M$, which does not intersect $W:= (\Tilde{\varphi}\circ\iota)^{-1} \Big((x^++C^-_{\nicefrac{6}{7}})\cap (x^-+C^+_{\nicefrac{6}{7}})\Big)$, which is a contradiction to claim 4 and thus proves claim 5.\\
    \\
    This implies the first point of Step 1 and the second point follows from (\ref{step1.2done}) together with $0<\mu<y_0^-<y_0^+$.\\
    \\
    \begin{addmargin}[5pt]{0pt}
    \underline{\textbf{Step 2:}} We show that Step 1 implies, the existence of a constant $0<C_d<\infty$ such that for any Cauchy hypersurface $\Sigma$ of $M$ and any two points in $I^-(\gamma(\mu),M)\cap\Sigma$ the distance in $\Sigma$ is bounded by $C_d$, i.e. for all $p,q\in I^-(\gamma(\mu),M)\cap\Sigma$ we have
    \[d_\Sigma(p,q):=\inf_{\gamma:[0,1]\rightarrow\Sigma\,\,\text{piecewiese smooth},\,\gamma(0)=p,\,\gamma(1)=q} \Big\{\int_0^1 \sqrt{\overline{g}(\dot{\gamma}(s),\dot{\gamma}(s))}ds\Big\} \leq C_d\]
    where $\overline{g}$ is the induced metric on $\Sigma$.\\
    \end{addmargin}
    
    We only consider Cauchy hypersurface $\Sigma$ in $M$ for which $\gamma(\mu)\in I^+(\Sigma,M)$ holds. There is nothing to prove otherwise. For any $\underline{x}\in(-\varepsilon_1,\varepsilon_1)^d$ we can now consider the curves $\sigma_{\underline{x}}:(f(\underline{x}),\varepsilon_0)\rightarrow M$ given by
    \[\sigma_{\underline{x}}(s)=(\iota^{-1}\circ\Tilde{\varphi}^{-1})(s,\underline{x})\]
    
    Clearly, these are FDTL and past inextendible curves in $M$ and by the second point of Step 1, they end up in $I^+(\gamma(\mu),M)\subseteq I^+(\Sigma,M)$. Each of these curves need to intersect the Cauchy hypersurface $\Sigma$ exactly once. This allows us to define a well defined map
    \[h:(-\varepsilon_1,\varepsilon_1)^d\rightarrow(-\varepsilon_0,\varepsilon_0)\]

    where $h(\underline{x})$ is the unique intersection point of $\sigma_{\underline{x}}$ and $\Sigma$.\\
    \\
    \underline{Claim 1:} $h:(-\varepsilon_1,\varepsilon_1)^d\rightarrow(-\varepsilon_0,\varepsilon_0)$ is smooth.\\
    \\
    Fix some $\underline{x}_0\in(-\varepsilon_1,\varepsilon_1)^d$ and remember that we defined Cauchy hypersurfaces to be smoothly embedded hypersurfaces, thus $\Tilde{\varphi}(\iota(\Sigma)\cap \Tilde{U})$ is a smooth submanifold of $R_{\varepsilon_0,\varepsilon_1}$.
    By definition there exists an open neighborhood $W\subseteq R_{\varepsilon_0,\varepsilon_1}$ of $(h(\underline{x}_0),\underline{x}_0)$ and a smooth submersion $u:W\rightarrow\mathbb{R}$ such that $\Tilde{\varphi}(\iota(\Sigma)\cap \Tilde{U})\cap W = \{u=0\}$.
    Furthermore, the timelike vector field $\partial_0$ can be nowhere tangent to $\Tilde{\varphi}(\iota(\Sigma)\cap \Tilde{U})$, since the tangent spaces of Cauchy surfaces do not contain any timelike vectors, thus $\partial_0 u|_{(h(\underline{x}_0),\underline{x}_0)}\neq 0$.
    It follows from the implicit function theorem that there exists an open neighborhood $V\subseteq (-\varepsilon_1,\varepsilon_1)^d$ of $\underline{x}_0$ and a smooth function $v:V\rightarrow\mathbb{R}$ such that $u(v(\underline{x}),\underline{x})=0$. From the definition of $h$ it must thus hold that $h|_V=v$, which shows that $h$ is smooth.\\
    \\
    \underline{Claim 2:} there exists a $0<C<\infty$ such that $|\partial_i h(\underline{x})|\leq C$ holds for all $\underline{x}\in(-\varepsilon_1,\varepsilon_1)^d$ and all $i\in\{1,...,d\}$.\\
    \\
    Clearly, $T_{(h(\underline{x}),\underline{x})} \Tilde{\varphi}(\iota(\Sigma)\cap \Tilde{U}) = span\{(\partial_1 h|_{\underline{x}})\partial_0+\partial_1,..., (\partial_d h|_{\underline{x}})\partial_0+\partial_d\}$ are the tangent spaces for all $\underline{x}\in(-\varepsilon_1,\varepsilon_1)^d$. Since no timelike vector can be contained in the tangent space of a Cauchy hypersurface, we get for all $i\in\{1,...,d\}$
    \begin{equation}\label{boundforh}
        0\leq \Tilde{g}((\partial_i h)\partial_0+\partial_i, (\partial_i h)\partial_0+\partial_i)= (\partial_ih)^2\Tilde{g}_{00}+2(\partial_ih)\Tilde{g}_{0i}+\Tilde{g}_{ii}
    \end{equation}

    where equality holds if, and only if
    \[(\partial_ih)_\pm= \frac{-\Tilde{g}_{0i}\pm \sqrt{(\Tilde{g}_{0i})^2-\Tilde{g}_{ii}\Tilde{g}_{00}}}{\Tilde{g}_{00}}\]

    which is clearly independent of $h$ and thus of the Cauchy hypersurface $\Sigma$.
    \[\max\{|(\partial_ih)_+|,|(\partial_ih)_-|\}\leq C\]
    holds due to the uniform causality bounds $|\Tilde{g}_{00}|<-1+\delta_0<0$ and $|\Tilde{g}_{\mu\nu}|<1+\delta_0$, where $0<C<\infty$ is a constant depending only on $\delta_0$. Moreover, since $\Tilde{g}_{00}<0$, the inequality (\ref{boundforh}) implies
    \[(\partial_ih)_-<(\partial_ih)<(\partial_ih)_+\]
    and thus $|\partial_ih|\leq C$ for all $i\in\{1,...,d\}$.\\
    \\
    Now we define the graph of $h$ by $\psi:(-\varepsilon_1,\varepsilon_1)^d\rightarrow R_{\varepsilon_0,\varepsilon_1}$ with $\psi(\underline{x})=(h(\underline{x}),\underline{x})$.
    This parameterises a smooth submanifold $\Tilde{S}$ of $R_{\varepsilon_0,\varepsilon_1}$  which is isometric to an open subset $S$ of $\Sigma\subseteq M$, via $\iota^{-1}\circ\Tilde{\varphi}^{-1}$.
    With respect to the chart $\psi^{-1}$, we denote the components of the metric $\overline{g}$ on $\Tilde{S}$ that is induced by $\Tilde{g}$ on $R_{\varepsilon_0,\varepsilon_1}$ by $\overline{g}_{ij}$, where $i,j\in\{1,...,d\}$.\\
    \\
    \underline{Claim 3:} there exists a constant $0<C_{\overline{g}}<\infty$ such that $|\overline{g}_{ij}(\underline{x})|<C_{\overline{g}}$ holds for all $\underline{x}\in(-\varepsilon_1,\varepsilon_1)^d$ and all $i,j=1,...,d$.\\
    \\
    The components of the induced metric can be computed by
    \[\overline{g}_{ij}=(\psi^*\Tilde{g})_{ij}= \Tilde{g}_{\mu\nu}\frac{\partial\psi^{\mu}}{\partial x_i} \frac{\partial\psi^{\nu}}{\partial x_j} = \Tilde{g}_{00}\frac{\partial h}{\partial x_i} \frac{\partial h}{\partial x_j} + \Tilde{g}_{0j}\frac{\partial h}{\partial x_i} + \Tilde{g}_{i0} \frac{\partial h}{\partial x_j} + \Tilde{g}_{ij}\]

    Thus, the uniform causality bounds of the metric combined with the uniform bounds form Claim 2 imply, that there exists a constant $0<C_{\overline{g}}<\infty$ such that $|\overline{g}_{ij}|\leq C_{\overline{g}}$ holds for all $i,j\in\{1,..,d\}$, which finishes the claim.\\
    \\
    To complete Step 2, let $p,q\in I^-(\gamma(\mu),M)\cap\Sigma$. $\iota(I^-(\gamma(\mu),M))$ is completely contained in the image of the chart $\Tilde{\varphi}$, by the first point of Step 1. Thus, there exist $\underline{x},\underline{y}\in(-\varepsilon_1,\varepsilon_1)^d$ such that $\psi(\underline{x})=\Tilde{\varphi}(\iota(p))$ and $\psi(\underline{y})=\Tilde{\varphi}(\iota(q))$.
    Let $\sigma:[0,1]\rightarrow (-\varepsilon_1,\varepsilon_1)^d$ be given by $\sigma(s)=\underline{x}+s(\underline{y}-\underline{x})$, i.e. the straight line in $(-\varepsilon_1,\varepsilon_1)^d$ connecting $\underline{x}$ and $\underline{y}$.\\
    It follows that
    \begin{equation*}
        \begin{split}
            L(\sigma) & =\int_0^1 \sqrt{\overline{g}\big(\dot{\sigma}(s),\dot{\sigma}(s)\big)}\,ds\\
            & = \int_0^1 \sqrt{\Big(\sum_{i,j=1}^d (\underline{y}_i-\underline{x}_i)\overline{g}_{ij}\big(\sigma(s)\big) (\underline{y}_j-\underline{x}_j)\Big)}\,ds\\
            & \leq  \int_0^1 \sqrt{\Big(\sum_{i,j=1}^d \sqrt{d}\varepsilon_1 C_{\overline{g}} \sqrt{d}\varepsilon_1\Big)}\, ds\\
            & =\varepsilon_1 d^{\nicefrac{3}{2}} \sqrt{C_{\overline{g}}}
        \end{split}
    \end{equation*}

    This is a uniform bound that is independent of $\underline{x},\underline{y}\in(-\varepsilon_1,\varepsilon_1)^d$.
    We can connect $p$ and $q$ in $\Sigma$ by the smooth curve $\iota^{-1}\circ\Tilde{\varphi}^{-1}\circ \psi \circ \sigma$, which has length less or equal to $\varepsilon_1 d^{\nicefrac{3}{2}} \sqrt{C_{\overline{g}}}$.\\
    Since $S\subseteq\Sigma$ we get
    \begin{equation*}
        \begin{split}
            d_{\Sigma}(p,q) & = \inf_{\gamma:[0,1]\rightarrow\Sigma\,\,\text{piecewiese smooth},\,\gamma(0)=p,\,\gamma(1)=q} \Big\{\int_0^1 \sqrt{\overline{g}(\dot{\gamma}(s),\dot{\gamma}(s))}ds\Big\}\\
            & \leq \inf_{\gamma:[0,1]\rightarrow S\,\,\text{piecewiese smooth},\,\gamma(0)=p,\,\gamma(1)=q} \Big\{\int_0^1 \sqrt{\overline{g}(\dot{\gamma}(s),\dot{\gamma}(s))}ds\Big\}\\
            & \leq \varepsilon_1 d^{\nicefrac{3}{2}} \sqrt{C_{\overline{g}}}
        \end{split}
    \end{equation*}

    which concludes Step 2 with $C_d:= \varepsilon_1 d^{\nicefrac{3}{2}} \sqrt{C_{\overline{g}}}$.\\
    \\
    \begin{addmargin}[5pt]{0pt}
    \underline{\textbf{Step 3:}} We show that the geometry of $(M,g)$ contradicts Step 2.\\
    \end{addmargin}
    
    As seen in the proof of Lemma \ref{globallyhyperbolic} the hypersurfaces of constant $t$ are Cauchy hypersurfaces, so consider the family $\Sigma_n:=\{t = \frac{1}{n}\}$, $n\in\mathbb{N}$, of Cauchy hypersurfaces. The induced metric $\overline{g}_n$ on $\Sigma_n$ is given by
    \[\overline{g}_n=\sum_{i=1}^d n^{-2p_i} dx_i^2\]

    Note that $\gamma(\frac{\mu}{2})\in I^-(\gamma(\mu),M)$, so by openness of the past we know that there exists some $\delta>0$ such that 
    \[\{\gamma_t(\frac{\mu}{2})\}\times [\underline{\gamma}_1(\frac{\mu}{2})-\delta,\underline{\gamma}_1(\frac{\mu}{2})+\delta]\times ...\times [\underline{\gamma}_d(\frac{\mu}{2})-\delta,\underline{\gamma}_d(\frac{\mu}{2})+\delta] \subseteq I^-(\gamma(\mu),M)\]
    And since $\partial_t$ was a timelike vector field we clearly get
    \[(0,\gamma_t(\frac{\mu}{2})]\times [\underline{\gamma}_1(\frac{\mu}{2})-\delta,\underline{\gamma}_1(\frac{\mu}{2})+\delta]\times ...\times [\underline{\gamma}_d(\frac{\mu}{2})-\delta,\underline{\gamma}_d(\frac{\mu}{2})+\delta] \subseteq I^-(\gamma(\mu),M)\]

    Choose $i_0\in\{1,...,d\}$ that corresponds to a negative Kasner exponent, i.e. $p_{i_0}<0$. Consider the following sequence of points
    \[p_n:=\big(\frac{1}{n},\underline{\gamma}_1(\frac{\mu}{2}),...,\underline{\gamma}_{i_0}(\frac{\mu}{2})-\delta,...,\underline{\gamma}_d(\frac{\mu}{2})\big)\in I^-(\gamma(\mu),M)\cap\Sigma_n\]
    and
    \[p_n:=\big(\frac{1}{n},\underline{\gamma}_1(\frac{\mu}{2}),...,\underline{\gamma}_{i_0}(\frac{\mu}{2})+\delta,...,\underline{\gamma}_d(\frac{\mu}{2})\big)\in I^-(\gamma(\mu),M)\cap\Sigma_n\]

    with $n\geq n_0$ for some sufficiently large $n_0\in\mathbb{N}$.\\
    It is easy to see that the shortest piecewise smooth curve connecting $p_n$ and $q_n$ in $\Sigma_n$ is given by $\gamma_n:[-\delta,\delta]\rightarrow M$
    \[\gamma_n(s)=(\frac{1}{n},\underline{\gamma}_1(\frac{\mu}{2}),...,\underline{\gamma}_{i_0}(\frac{\mu}{2})+s,...,\underline{\gamma}_d(\frac{\mu}{2}))\]

    The length $L(\gamma_n)$ of $\gamma_n$ is given by
    \[L(\gamma_n)=\int_{-\delta}^\delta \sqrt{\overline{g}_n(\dot{\gamma}_n(s),\dot{\gamma}_n(s))}ds=\int_{-\delta}^\delta \sqrt{n^{-2\,p_{i_0}}} = 2\delta \sqrt{n^{-2\,p_{i_0}}}\]
    Since $p_{i_0}<0$ it follows that
    \[d_{\Sigma_n}(p_n,q_n)=2\delta \sqrt{n^{-2\,p_{i_0}}} \rightarrow \infty\]
    as $n\rightarrow\infty$, which contradicts Step 2. This concludes the proof.
\end{proof}

So by combining both previous theorems we finally get the main theorem.\\

\begin{theorem}\label{maintheorem}
    The Kasner spacetime (with a negative exponent) is $C^0$-inextendible.
\end{theorem}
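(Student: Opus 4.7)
The plan is to observe that this theorem is essentially a direct corollary of the two preceding results, Theorem \ref{Kasnerfutureinextendible} and Theorem \ref{kasnerpastinextendible}, combined with the observation following Lemma \ref{curveleavesM}. So the proof will be a short argument by contradiction of a purely logical character rather than a computational one.

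I would argue as follows. Suppose for contradiction that the Kasner spacetime $(M,g)$ admits a $C^0$-extension $\iota:M\hookrightarrow\tilde{M}$. By Lemma \ref{curveleavesM} there is a timelike curve $\tilde{\gamma}:[0,1]\to\tilde{M}$ with $\tilde{\gamma}([0,1))\subseteq\iota(M)$ and $\tilde{\gamma}(1)\in\partial\iota(M)$. Since a timelike curve is either future-directed or past-directed, its endpoint lies in either $\partial^+\iota(M)$ or $\partial^-\iota(M)$. Consequently $\partial^+\iota(M)\cup\partial^-\iota(M)\neq\emptyset$, which means that $\iota$ is a future $C^0$-extension or a past $C^0$-extension in the sense of the earlier definition.

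In the first situation we obtain a contradiction with Theorem \ref{Kasnerfutureinextendible}, which rules out any future $C^0$-extension of the Kasner spacetime with negative exponent. In the second situation we obtain a contradiction with Theorem \ref{kasnerpastinextendible}, which rules out any past $C^0$-extension. Either way we reach a contradiction, so no $C^0$-extension can exist and the theorem follows.

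There is no real obstacle here: all the difficulty has been absorbed into the two preceding theorems (in particular the three-step argument proving past $C^0$-inextendibility, which is the geometrically substantive part). The only thing to be careful about is to invoke Lemma \ref{curveleavesM} to pin down that any hypothetical extension must be at least one of future or past type, so that the dichotomy used above is exhaustive; this is exactly the content of the remark following that lemma.
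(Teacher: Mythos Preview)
Your proposal is correct and matches the paper's own proof essentially line for line: argue by contradiction, invoke Lemma \ref{curveleavesM} to obtain $\partial^+\iota(M)\cup\partial^-\iota(M)\neq\emptyset$, and then derive a contradiction from Theorems \ref{Kasnerfutureinextendible} and \ref{kasnerpastinextendible}.
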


\begin{proof}
    The proof is by contradiction, so assume there exists a $C^0$-extension $\iota:M\hookrightarrow\Tilde{M}$ of the Kasner spacetime. Then by Lemma \ref{curveleavesM} we get $\partial^+\iota(M)\cup\partial^-\iota(M)\neq\emptyset$. However, Theorem \ref{Kasnerfutureinextendible} shows that $\partial^+\iota(M)=\emptyset$ and Theorem \ref{kasnerpastinextendible} shows that $\partial^-\iota(M)=\emptyset$, witch is a contradiction. Thus the Kasner spacetime (with a negative exponent) is $C^0$-inextendible.
\end{proof}

\bibliographystyle{plain}
\bibliography{main}

\end{document}